\newif\if@restonecol
\newtheorem{theorem}{Theorem}[section]
\newtheorem{lemma}[theorem]{Lemma}
\def\ok#1{\mbox{\raisebox{0ex}[1ex][1ex]{$#1$}}}
\def \tuple#1{\langle #1 \rangle}
\newcommand{\sra}{{\shortrightarrow}}
\newcommand{\id}{\mathrm{id}}
\newcommand{\ud}{\triangleq}
\newcommand{\ra}{\rightarrow}
\newcommand{\raee}{\sra^{\exists}}
\newcommand{\nraee}{\not\!\!\sra^{\exists}}
\newcommand{\tre}{\trianglerighteq}
\newcommand{\tle}{\trianglelefteq}
\newcommand{\Lra}{\Leftrightarrow}
\newcommand{\Ra}{\Rightarrow}
\newcommand{\La}{\Leftarrow}
\newcommand{\cK}{{\mathcal{K}}}
\newcommand{\cP}{{\mathcal{P}}}
\newcommand{\Psim}{\ensuremath{P_{\mathrm{sim}}}}
\newcommand{\Rsim}{\ensuremath{R_{\mathrm{sim}}}}
\newcommand{\ESim}{\ensuremath{\mathrm{ESim}}}
\newcommand{\tl}{\vartriangleleft}
\newcommand{\sraee}{\sra^{\exists}}
\DeclareMathOperator{\Rel}{Rel}
\def\ctemplate#1#2{#1$\langle$#2$\rangle$}
\def\clist#1{\ctemplate{\KwSty{list}}{\KwSty{#1}}\ }
\DeclareMathOperator{\PStable}{PStable}
\DeclareMathOperator{\RStable}{RStable}
\DeclareMathOperator{\PR}{PR}
\DeclareMathOperator{\RRefiner}{RRefiner}
\DeclareMathOperator{\PRefiner}{PRefiner}
\DeclareMathOperator{\pre}{pre}
\DeclareMathOperator{\post}{post}
\DeclareMathOperator{\Part}{Part}
\DeclareMathOperator{\Split}{\mathit{Split}}
\DeclareMathOperator{\RT}{RT}
\DeclareMathOperator{\GPP}{GPP}
\DeclareMathOperator{\parent}{parent}
\DeclareMathOperator{\Remove}{\mathit{Rem}}
\DeclareMathOperator{\oldRemove}{\mathit{oldRem}}
\DeclareMathOperator{\Count}{\mathrm{Count}}
\DeclareMathOperator{\RStabilize}{\mathit{RStabilize}}
\DeclareMathOperator{\PStabilize}{\mathit{PStabilize}}
\newcommand{\cstate}{\KwSty{State}}
\newcommand{\cblock}{\KwSty{Block}}
\newcommand{\creturn}{\KwSty{return}\ }
\newcommand{\ctt}{\KwSty{tt}}
\newcommand{\cff}{\KwSty{ff}}
\newcommand{\cnull}{\KwSty{null}}
\def\ctemplate#1#2{#1$\langle$#2$\rangle$}
\def\clist#1{\ctemplate{\KwSty{list}}{\KwSty{#1}}}
\begin{document}

\title{An Efficient Simulation Algorithm on Kripke Structures}
\author{\normalsize{\sc Francesco Ranzato}\\[10pt]
{\small Dipartimento di Matematica, 
University of Padova, Italy}}

\pagestyle{plain}
\date{}
\maketitle

\begin{abstract}
A number of algorithms for computing the simulation preorder (and equivalence) 
on Kripke structures 
are available.  Let $\Sigma$ denote the state space, $\sra$  the 
transition relation and $P_{\mathrm{sim}}$ the partition of  $\Sigma$ induced by 
simulation equivalence.
While some algorithms are designed to reach the best space bounds, 
whose dominating additive term
is $|P_{\mathrm{sim}}|^2$, other
algorithms are devised to attain the best time complexity $O(|P_{\mathrm{sim}}||\sra|)$. 
We present a novel
simulation algorithm
which is both space and time efficient: it runs in 
$O(|P_
{\mathrm{sim}}|^2 \log |\Psim| + |\Sigma|\log |\Sigma|)$  space and
$O(|P_{\mathrm{sim}}||\sra|\log |\Sigma|)$ 
time.  Our simulation algorithm thus reaches the best space bounds while closely 
approaching the best time complexity.  
\end{abstract}

\section{Introduction}
The simulation preorder is a 
fundamental behavioral relation
widely used in process algebra for establishing system correctness
and in model checking as a suitable abstraction for reducing the size
of state spaces \cite{cgp99}. 
The problem of efficiently computing the simulation preorder (and consequently simulation equivalence) 
on  finite 
Kripke structures has been thoroughly investigated and generated a 
number of simulation algorithms \cite{bloom89,bp95,bg03,cps93,CRT11,gpp03,GP08,hhk95,rt10,TC01}. 
Both time and space complexities play 
an important role 
in simulation algorithms, since in several applications,
especially in model checking,  
memory requirements may become a serious bottleneck as the input transition system grows.  

\paragraph*{\textbf{State of the Art.}}
Consider a finite Kripke structure where 
$\Sigma$ denotes
the state space, $\sra$ the transition relation and $P_{\mathrm{sim}}$
the partition of $\Sigma$ induced by simulation equivalence. 
The best simulation algorithms are those by, in chronological order, Gentilini,
Piazza and Policriti (GPP)~\cite{gpp02,gpp03} (subsequently corrected in~\cite{GP08}), 
Ranzato and Tapparo (RT)~\cite{rt07,rt10},  
Markovski (Mar)~\cite{mar11}, C{\'e}c{\'e} (Space-C{\'e}c and Time-C{\'e}c)~\cite{cec13}. 
The simulation algorithms GPP and RT are designed for Kripke structures, while  
Space-C{\'e}c, Time-C{\'e}c and Mar are for more general labeled transition systems. 
Their space and time complexities are summarized in the following table. 
\begin{center}
{\small 
\renewcommand{\arraystretch}{1.25}
\centering
\begin{tabular}{|l||l|l|}
\hline
 ~\textbf{Algorithm} & ~\textbf{Space complexity} & ~\textbf{Time complexity}\\
\hline \hline
~Space-C{\'e}c  \cite{cec13} & ~$O(|P_{\mathrm{sim}}|^2  + |\sra|\log|\sra|)$ & ~$O(|\Psim|^2|\sra|)$ \\[1mm]
~Time-C{\'e}c  \cite{cec13} & ~$O(|P_{\mathrm{sim}}||\Sigma|\log|\Sigma| + |\sra|\log |\sra|)$ & ~$O(|\Psim||\sra|)$ \\[1mm]
~$\GPP$ \cite{gpp03}& ~$O(|P_{\mathrm{sim}}|^2\log |\Psim| + |\Sigma|\log|\Sigma|)$  &~$O(|\Psim|^2|\sra|)$   \\[1mm]
~Mar \cite{mar11}& ~$O((|\Sigma|+|P_{\mathrm{sim}}|^2) \log |\Psim|)$  &~$O(|\sra| + |\Psim||\Sigma| +|\Psim|^3)$~   \\[1mm]
~$\RT$ \cite{rt10} & ~$O(|P_{\mathrm{sim}}||\Sigma|\log|\Sigma|)$ & ~$O(|\Psim||\sra|)$ \\[1mm]
\hline 
~$\ESim$ (this paper)~  & ~$O(|\Psim|^2\log|\Psim| + |\Sigma|\log |\Sigma|)$~ & ~$O(|P_{\mathrm{sim}}|
|\sra|\log |\Sigma|)$ \\
\hline
\end{tabular}
}
\end{center}
We remark that all the above space bounds are bit space complexities,  
i.e., the word size is a single bit. 
Let us also remark that all the articles \cite{gpp02,gpp03,GP08}
state that the bit space complexity of GPP is in $O(|P_{\mathrm{sim}}|^2  + |\Sigma| \log |\Psim|)$. However, as observed also in \cite{cec13}, this is not precise. 
In fact, the algorithm GPP \cite[Section~4, p.~98]{gpp03} 
assumes that the states belonging to some block 
are stored as a doubly linked list, and  this entails a
bit space complexity in $O(|\Sigma|\log|\Sigma|)$. Furthermore,  GPP 
uses Henzinger, Henzinger and Kopke \cite{hhk95} simulation 
algorithm (HKK) as a subroutine, whose 
bit space complexity 
is in $O(|\Sigma|^2 \log |\Sigma|)$,
which is 
called on a Kripke structure where states 
are blocks of the current partition. The bit space complexity of GPP
must therefore include an additive term 
$|\Psim|^2 \log |\Psim|$ and therefore
results to be $O(|P_{\mathrm{sim}}|^2 \log |P_{\mathrm{sim}}| + |\Sigma| \log |\Sigma|)$. 
It is worth observing that a 
space complexity in $O(|P_{\mathrm{sim}}|^2  + |\Sigma| \log |\Psim|)$
can be considered optimal for a simulation algorithm, since this 
is of the same order as the
size of the output, which needs $|P_{\mathrm{sim}}|^2$ space for
storing the simulation preorder as a partial order on simulation equivalence
classes and $|\Sigma| \log |\Psim|$ space for storing the simulation 
equivalence class for any state. Hence, the bit space complexities of 
GPP and Space-C{\'e}c can be considered quasi-optimal. 
As far as time complexity is concerned, 
the algorithms RT and Time-C{\'e}c both feature the best
time bound $O(|P_{\mathrm{sim}}| |\sra|)$.

\paragraph*{\textbf{Contributions.}}
We present here a novel space and time Efficient 
Simulation algorithm, called ESim, which features a 
 bit space complexity in 
$O(|P_{\mathrm{sim}}|^2 \log |\Psim| + |\Sigma|\log |\Sigma|)$ and a
time complexity in 
$O(|P_{\mathrm{sim}}||\sra|\log |\Sigma|)$. 
Thus, ESim reaches the best space bound of GPP and significantly improves the GPP time 
bound $O(|\Psim|^2|\sra|)$ 
by replacing a multiplicative factor $|\Psim|$ with $\log |\Sigma|$. Furthermore, 
ESim significantly 
improves the RT space bound $O(|P_{\mathrm{sim}}||\Sigma|\log|\Sigma|)$ and closely 
approaches the best time bound $O(|\Psim||\sra|)$ of RT and Time-C{\'e}c. 

ESim is a
partition refinement algorithm, meaning that it maintains and iteratively refines
a so-called partition-relation pair $\tuple{P,\tle}$, where
$P$ is a partition of $\Sigma$ that overapproximates the
final simulation partition $\Psim$, while $\tle$ is a binary relation over
$P$ which overapproximates the final simulation
preorder. ESim relies on the following three main points, which
in particular allow to attain the above
complexity bounds.
\begin{itemize}
\item[{\rm (1)}] Two distinct 
notions of partition and relation stability for a partition-relation
pair are introduced. Accordingly, at a logical level,   
ESim is designed as a partition refinement algorithm 
which iteratively performs two clearly distinct refinement steps:
the refinement 
of the current partition $P$ which splits some blocks of $P$ 
and the refinement of the relation $\tle$ which removes some pairs of blocks
from $\tle$.     
\item[{\rm (2)}] ESim exploits a logical characterization
of partition refiners, i.e.\ blocks of $P$ that allow to
split the current partition $P$, which admits an efficient implementation.  
\item[{\rm (3)}] ESim only relies on data structures, like lists and matrices, 
that are indexed on and contain blocks 
of the current partition $P$. The hard task here is to devise
efficient ways to  keep
updated these partition-based data structures along the iterations of ESim.  
We show that this can be done efficiently,  
in particular
by resorting to 
Hopcroft's ``process the smaller half'' principle \cite{hop71} when
updating a crucial data structure 
after a partition split. 
\end{itemize}

This is the full version of the conference paper \cite{ran-mfcs13}.

\section{Background}\label{background}

\paragraph*{\rm \textbf{Notation.}}
If $R\subseteq \Sigma\times \Sigma$ is any relation 
and $X\subseteq \Sigma$ then $R(X)\,\ok{\ud}\, \{x'\in
\Sigma~|~ \exists x\in X.\: \ok{(x,x')}\in R\}$. 
Recall that $R$ is a preorder relation when it is reflexive and
transitive. 
If $f$ is a function defined on $\wp(\Sigma)$ and $x\in \Sigma$ then
we often write $f(x)$ to mean $f(\{x\})$. 
$\Part(\Sigma)$ denotes the set of partitions of
$\Sigma$.  If $P\in \Part(\Sigma)$, $s\in \Sigma$ and $S\subseteq \Sigma$ 
then $P(s)$
denotes the block of $P$ that contains $s$ while $P(S) = \cup_{s\in S} P(s)$. 
$\Part(\Sigma)$ is endowed
with the standard partial order $\preceq$: $P_1 \preceq
P_2$, i.e.\ $P_2$ is coarser than $P_1$, iff for any $s\in \Sigma$, 
$P_1(s) \subseteq P_2(s)$. 
If $P_1\preceq P_2$ and $B\in P_1$ then $P_2(B)$ is a block of $P_2$ which is
also denoted by
$\parent_{P_2}(B)$. 
For a given nonempty subset $S\subseteq \Sigma$ called splitter, we
denote by $\Split(P,S)$ the partition obtained from $P$ by replacing
each block $B\in P$ with $B\cap S$ and
$B\smallsetminus S$, %when these sets are nonempty, 
where we also allow no splitting, namely
$\Split(P,S)=P$ (this happens exactly when $P(S)=S$).

\paragraph*{\rm \textbf{Simulation Preorder and Equivalence.}}
A transition system $(\Sigma ,\sra)$ consists of a 
set $\Sigma$ of states and of a transition relation $\sra 
\subseteq \Sigma \times
\Sigma$. 
Given a set $\mathit{AP}$ of atoms
(of some specification language), a Kripke structure (KS) 
$\cK= (\Sigma ,\sra,\ell)$ over
$\mathit{AP}$ consists of a transition system $(\Sigma ,\sra)$
together with a state labeling function $\ell:\Sigma \ra
\wp(\mathit{AP})$. The state partition induced by $\ell$ is denoted by
$P_\ell \,\ok{\ud}\, \{\{s'\in
\Sigma~|~ \ell(s)=\ell(s')\}~|~s\in \Sigma\}$.
The predecessor/successor
transformers $\pre , \post:\wp(\Sigma)\ra \wp(\Sigma)$  are
defined as usual:
$\pre (T) \ud \{ s\in \Sigma ~|~\exists t\in T.\; s \sra
t\}$ and $\post (S) \ud \{ t\in \Sigma ~|~\exists s\in S.\; s \sra
t\}$.
If $S_1,S_2\subseteq \Sigma$ then $S_1
\sraee S_2$ iff there exist $s_1\in S_1$ and $s_2\in
S_2$ such that $s_1 \sra s_2$.

A relation 
 $R\subseteq \Sigma \times \Sigma $ is a
 simulation on a Kripke structure $(\Sigma,\sra,\ell)$ 
 if for any $s,s'\in \Sigma$, if $s'\in R(s)$ then: 
 \begin{itemize}
 \item[{\rm (A)}] $\ell(s) = \ell (s')$; 
 \item[{\rm (B)}] for
 any $t\in \Sigma $ such that $\ok{s \sra t}$, there exists $t'\in \Sigma$
 such that $\ok{s'\sra t'}$
 and $t'\in R(t)$.
 \end{itemize}
Given $s,t\in \Sigma$, $t$ simulates $s$, denoted by $s\leq t$, if there exists
a simulation relation $R$ such that $t\in R(s)$. 
It turns out that the largest simulation on a given KS exists,
is a preorder relation called simulation preorder and is
denoted by $R_{\mathrm{sim}}$. Thus, for any $s,t\in \Sigma$, we have that $s\leq
t$ iff $(s,t)\in  \ok{R_{\mathrm{sim}}}$.
The simulation partition $P_{\mathrm{sim}}\in \Part(\Sigma)$ is 
the symmetric reduction of 
$R_{\mathrm{sim}}$, namely, for any $s,t\in \Sigma$, 
$\Psim (s) = \Psim (t)$
iff $s\leq t$ and $t\leq s$.

\section{Logical Simulation Algorithm}
 \label{secbasicsa}

\subsection{Partition-Relation Pairs}
A partition-relation pair $\cP= \tuple{P,\tle}$, PR
for short, is a state partition $P\in \Part(\Sigma)$ together with
a binary relation $\tle\; \subseteq P\times P$ between blocks of $P$.
We write $B\tl C$ when $B\tle C$ and $B\neq C$ and $(B',C')\tle (B,C)$
when $B'\tle B$ and $C'\tle C$. When $\tle$ is a 
preorder/partial order  
then $\cP$ is called, respectively, a
preorder/partial order PR.

PRs allow to represent symbolically, i.e.\ 
through state partitions, a relation between states. 
A relation $R\subseteq \Sigma
\times \Sigma$ induces a PR $\PR(R)=\tuple{P,\tle}$ defined as follows: 
\begin{itemize}
\item[--] 
for any $s$, $P(s)\ud \{t\in \Sigma~|~R(s)=R(t)\}$;
\item[--] for any $s,t$, $P(s)\tle P(t) 
\text{~iff~} t\in R(s)$.
\end{itemize}
It is easy to note that if $R$ is a preorder then 
$\PR(R)$ is a partial order PR.  
On the other hand, a PR $\cP= \tuple{P,\tle}$ 
induces the following relation $\Rel(\cP)\subseteq \Sigma\times \Sigma$:
$$(s,t)\in \Rel(\cP) \;\Lra\; P(s) \tle P(t).$$
Here, if $\cP$ is a preorder PR then $\Rel(\cP)$ is clearly a preorder. 
  
A PR $\cP=\tuple{P,\tle}$ is defined to be a
simulation PR on a KS $\cK$ when $\Rel(\cP)$ is a 
simulation on $\cK$, namely when $\cP$ represents a simulation
relation between states. Hence, if $\cP$ is a simulation PR and
$P(s)=P(t)$ then $s$ and $t$ are simulation equivalent, while if
$P(s)\tle P(t)$ then $t$ simulates $s$.

Given a  PR $\cP= \tuple{P,\tle}$, the map
$\mu_\cP : \wp(\Sigma) \ra \wp(\Sigma)$ is defined as follows: 
$$\text{for any~} X\in
\wp(\Sigma),\: \mu_\cP (X) \ud \Rel(\cP)(X)=
\cup\{ C\in P ~|~ \exists s\in X. \: P(s)\tle C \}.
$$
Note that, for any $s\in \Sigma$, 
$\mu_\cP (s) =
\mu_\cP (P(s)) = \cup\{ C\in P ~|~ P(s)\tle C \}$. 
For preorder PRs, this map allows us to characterize the property of
being a simulation PR as follows. 

 \begin{theorem}
   \label{simpr}
   Let $\cP= \tuple{P,\tle}$ be a preorder PR. Then, $\cP$ is a
   simulation 
   iff
   \begin{enumerate}
     \item[{\rm (i)}] if $B\tle C$, $b\in B$ and $c\in C$ then $\ell(b)=\ell(c)$;
   \item[{\rm (ii)}] if $B\raee C$ and $B \tle D$ then $D\raee \mu_\cP(C)$;
   \item[{\rm (iii)}] for any $C\in P$, $P=\Split(P,\pre( \mu_\cP (C)))$.
   \end{enumerate}
 \end{theorem}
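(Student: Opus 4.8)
The plan is to expand the two defining clauses (A) and (B) of ``$\Rel(\cP)$ is a simulation'' into block-theoretic terms and then match clause (A) with item (i) and clause (B) with the conjunction of items (ii) and (iii). Throughout I will use three elementary reformulations: that $s'\in\Rel(\cP)(s)$ iff $P(s)\tle P(s')$; that $t'\in\mu_\cP(t)$ iff $P(t)\tle P(t')$ (so that clause (B) asks for $s'\in\pre(\mu_\cP(t))$); and that item (iii) is equivalent to saying that $\pre(\mu_\cP(C))$ is a union of blocks of $P$, via the stated characterization $\Split(P,S)=P$ iff $P(S)=S$.

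First, clause (A) reads ``$P(s)\tle P(s')\Ra \ell(s)=\ell(s')$'', which is literally item (i) once we set $B=P(s)$ and $C=P(s')$; this settles (i) in both directions and is routine. The substantive work is to prove that, for a preorder PR, clause (B) is equivalent to (ii) together with (iii). Rewritten using $\mu_\cP$, clause (B) states: whenever $P(s)\tle P(s')$ and $s\sra t$, then $s'\in\pre(\mu_\cP(t))$.

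For the forward direction I assume (B). To obtain (ii), given $B\raee C$ witnessed by some $b\sra c$ with $b\in B$, $c\in C$, and given $B\tle D$, I pick an arbitrary $d\in D$ and apply (B) to the pair $b,d$ (legitimate since $P(b)=B\tle D=P(d)$) and the transition $b\sra c$; this yields $c'$ with $d\sra c'$ and $P(c)\tle P(c')$, hence $c'\in\mu_\cP(C)$ and $D\raee\mu_\cP(C)$. To obtain (iii) I argue by contradiction: if $\pre(\mu_\cP(C))$ split some block $B$, I pick $s_1\in B\cap\pre(\mu_\cP(C))$ and $s_2\in B\smallsetminus\pre(\mu_\cP(C))$, take a transition $s_1\sra y$ with $C\tle P(y)$, and apply (B) to the pair $s_1,s_2$ (legitimate by reflexivity, since $P(s_1)=P(s_2)$) to produce $y'$ with $s_2\sra y'$ and $P(y)\tle P(y')$; transitivity then gives $C\tle P(y')$, so $y'\in\mu_\cP(C)$ and $s_2\in\pre(\mu_\cP(C))$, a contradiction.

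For the converse I assume (ii) and (iii) and verify (B). Given $P(s)\tle P(s')$ and $s\sra t$, set $B=P(s)$, $D=P(s')$, $C=P(t)$, so that $B\tle D$ and $B\raee C$; item (ii) yields $D\raee\mu_\cP(C)$, i.e.\ $D\cap\pre(\mu_\cP(C))\neq\emptyset$, and item (iii) makes $\pre(\mu_\cP(C))$ a union of blocks, forcing $D\subseteq\pre(\mu_\cP(C))$ and hence $s'\in\pre(\mu_\cP(t))$, as required. The crux of the whole argument, and the step I expect to be most delicate, is exactly this interplay between (ii) and (iii): clause (B) is a \emph{universal} statement about every state $s'$ of a block, whereas (ii) only delivers the \emph{existence} of one state of $D$ with the right successor; it is precisely the partition-stability condition (iii), saying that $\pre(\mu_\cP(C))$ respects blocks, that upgrades the existential conclusion to the universal one, while reflexivity and transitivity of $\tle$ are what make the contradiction in the proof of (iii) go through.
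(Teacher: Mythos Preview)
Your proof is correct and follows essentially the same approach as the paper's own proof: both match clause~(A) with item~(i) directly, derive~(ii) by applying the simulation property to an arbitrary $d\in D$, derive~(iii) by using reflexivity and transitivity of $\tle$ to propagate membership in $\pre(\mu_\cP(C))$ across a block, and for the converse combine~(ii) and~(iii) exactly as you describe (existential witness in $D$ upgraded to all of $D$ via partition stability). The only cosmetic difference is that the paper phrases the argument for~(iii) directly rather than by contradiction.
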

\begin{proof}
$(\Ra)$  Condition (i) clearly holds. 
Assume that $B\raee C$ and $B\tle D$. Hence, there exist
$b\in B$ and $c\in C$ such that $b\sra c$. Consider any state $d\in D$. 
Since $\cP$ is a
simulation and $P(b)\tle P(d)$, there exist some state $e$ 
such that $d\sra e$ and $C=P(c) \tle P(e)$. Hence, $D \raee \mu_\cP(C)$. 
Finally, if $C\in P$ and $x\in \pre(\mu_\cP(C))$ then there exists some
block $D\tre C$ and state $d\in D$ such that $x\sra d$. If
$y\in P(x)$ then since $P(x)=P(y)$, by reflexivity of $\tle$, 
we have that $P(x) \tle P(y)$, so that, since $\cP$
is a
simulation, there exists some state $e$ such that $y\sra e$ and $P(d) \tle P(e)$. 
Since $\tle$ is transitive, we have that $C\tle P(e)$. 
Hence,
 $y\in \pre(\mu_\cP(C))$. We have thus shown that $P(x)\subseteq
\pre(\mu_\cP(C))$, so that $P=\Split(P,\pre(\mu_\cP(C)))$. 
 
\noindent
$(\La)$ Let us show that $\Rel(\cP)$ is a simulation, i.e., 
if $P(s) \tle P(s')$ then: (a)~$\ell(s)=\ell(s')$; (b)~if $s\sra t$
then there exists $t'$ such that $s'\sra t'$ and $P(t) \tle P(t')$.  
Condition (a) holds by hypothesis (i). If $s \sra t$ then
$P(s) \sraee P(t)$ so that, by condition~(ii),  we have that 
$P(s') \sraee \mu_\cP
(P(t))$, namely there exists $s''\in P(s')$ such that
$s''\in \pre(\mu_\cP(P(t)))$. By condition (iii), $P(s'')
\subseteq \pre(\mu_\cP(P(t)))$, i.e., $s'\in
\pre(\mu_\cP(P(t)))$. Hence, there exists $t'$ such that $s'\sra t'$
and $P(t) \tle P(t')$. 
\end{proof}

\subsection{Partition and Relation Refiners}

By Theorem~\ref{simpr}, assuming that condition~(i) holds, 
there are two possible reasons for a PR
$\cP= \tuple{P,\tle}$ for not being a simulation:

 \begin{itemize}
 \item[{\rm (1)}] There exist $B,C,D\in P$ such that $B\raee C$, $B\tle D$,
   but $D\not\!\!\raee \mu_\cP (C)$; in this case we say that the
   block $C$ is a
   \emph{relation refiner} for $\cP$.
 \item[{\rm (2)}] 
 There exist $B,C\in P$ such that $B\cap\pre(\mu_\cP (C))\neq\varnothing$ 
   and $B\smallsetminus\pre(\mu_\cP (C))\neq \varnothing$; in this case we say that the
   block $C$
   is a \emph{partition refiner} for $\cP$.  
 \end{itemize}

\noindent
We therefore define:
\begin{itemize}
\item[] $\RRefiner(\cP) \ud \{C\in P~|~ C ~\text{is a relation refiner for}~
\cP\}$;
\item[] $\PRefiner(\cP) \ud \{C\in P~|~ C ~\text{is a partition refiner for}~
\cP\}$.
\end{itemize}
Accordingly, $\cP$ is defined to be 
relation or partition \emph{stable} when, respectively,
$\RRefiner (\cP) = \varnothing$ or $\PRefiner(\cP)=\varnothing$.   
Then, Theorem~\ref{simpr} can be read as follows: $\cP$ is a simulation iff $\cP$ 
satisfies condition~(i) and is both
relation and partition stable.

If $C\in \PRefiner (\cP)$ then
$P$ is first refined to $P' \ud  \Split(P,\pre(\mu_\cP(C)))$, i.e.\ $P$ 
is split w.r.t.\ the splitter $S= \pre(\mu_\cP(C))$. Accordingly,
the relation 
$\tle$ on $P$ is 
transformed into the following relation $\tle'$ defined on $P'$:

\renewcommand{\theequation}{\fnsymbol{equation}}
\setcounter{equation}{1}

\begin{equation} \label{eq1}
\tle' \;\:\ud\;\{(D,E)\in P'\times P' ~|~  
\parent_P(D) \tle \parent_P (E)\}
\end{equation}
Hence, two blocks $D$ and $E$ of the refined partition $P'$ 
are related by $\tle'$ if their parent blocks $\parent_P(D)$ and $\parent_P(E)$ 
in $P$ were related by $\tle$. Hence, if $\cP' = \tuple{P',\tle'}$ then
for all $D\in P'$, we have that $\mu_{\cP'}(D) = \mu_{\cP}(\parent_P(D))$. 
We will show that 
this refinement of $\tuple{P,\tle}$ is correct because if $B\in P$ is split
into $B\smallsetminus S$ and $B\cap S$ then all the states in $B\smallsetminus
S$ are not simulation equivalent to all the states in $B\cap S$.
Note that if $B\in P$ has been split into 
$B\cap S$
and $B\smallsetminus S$ then both $B\cap S \tle' B\smallsetminus S$ and 
$B\smallsetminus S \tle' B\cap S$ hold, and consequently $\cP'$ becomes
relation unstable.

On the other hand, if $\cP$ is partition stable and 
$C\in \RRefiner(\cP)$ then we will show that  
$\tle$ can be safely refined to the following relation $\tle'$:
\begin{gather}
\begin{aligned}\label{eq2}
\tle' &\ud \;\tle \smallsetminus \{ (B,D)\in P\times P~|~ B\raee C,\:
B \tle D,\:  D \not\!\!\raee \mu_\cP(C)\}\\
&=  \{ (B,D)\in P \times P~|~ B\tle D,\: 
 \big(B\raee C \Ra  D \raee \mu_\cP(C)\big)\}
\end{aligned}
\end{gather}
because if $(B,D)\in \;\tle\! \smallsetminus\! \tle'$ then all the states
in $D$ cannot simulate all the states in $B$.

\IncMargin{1.25em}
\LinesNumbered
\begin{algorithm}%[Ht]
\small
\PrintSemicolon
\SetAlgoVlined
\SetNlSkip{1.5em}
\SetAlTitleFnt{textsc}

\Indm
\FuncSty{{\rm $\mathrm{ESim} (\KwSty{PR}\;\tuple{P,\tle})\;\{$}}

\Indp
$\mathit{Initialize}()$;~$\PStabilize()$;~$\KwSty{bool} \PStable :=\RStabilize()$;~$\KwSty{bool} \RStable := \ctt$\;
\While{{\rm $\neg (\PStable \:\&\: \RStable)$}} {
\lIf{$\neg \PStable$}{\{$\RStable :=  \PStabilize()$;~
  $\PStable := \ctt$;\}}

\lIf{$\neg \RStable$}{\{$\PStable := \RStabilize()$;~
  $\RStable := \ctt$;\}}
}

\Indm
$\}$

\BlankLine
\BlankLine

\FuncSty{{\rm \KwSty{bool}~$\PStabilize ()\;\{$}}

\Indp
$P_{\mathrm{old}} := P$\;

\While{$\exists C\in \PRefiner(\cP)$}{
       $S:=\pre(\mu_\cP(C))$;
       $P:= \Split(S)$\;
       \lForAll{$(D,E)\in P\times P$}{
         $D \tle E \;:=\; \parent_P(D) \tle \parent_P(E)$\;
       }
       }

\creturn $(P=P_{\mathrm{old}})$\;

\Indm
$\}$

\BlankLine
\BlankLine

\FuncSty{{\rm $\KwSty{bool}~\RStabilize()\;\{$}}

\Indp

\tcp*[h]{{\rm \text{Precondition: $\PStable = \ctt$}}}

$\tle_{\mathrm{old}} \;:=\; \tle$;~ $\text{Delete} := \varnothing$\;

\While{$\exists C\in \RRefiner (\cP)$}{
      $\text{Delete} \; :=\; \text{Delete} 
      \cup \{ (B,D)\in P\times P~|~ B \tle D,\: B\raee C,\:  D \not\!\!\raee \mu_\cP(C)\}$\;     
   }
   $\tle \; :=\; \tle \smallsetminus \;\text{Delete}$\;

\creturn $(\tle \;=\; \tle_{\mathrm{old}})$\;

\Indm
$\}$

\caption{Logical Simulation Algorithm.}\label{basicsa}
\end{algorithm}

The above facts lead us to design 
a basic simulation algorithm $\ESim$ described 
in Figure~\ref{basicsa}. $\ESim$ maintains a PR $\cP=\tuple{P,\tle}$, 
which initially is $\tuple{P_\ell, \id}$ and is
iteratively refined as follows: 

\medskip
\noindent
$\PStabilize()$: If $\tuple{P,\tle}$ is not partition stable
then the partition $P$ is split for $\pre(\mu_\cP (C))$ 
as long as a partition refiner $C$ for $\cP$ exists, and when this happens
the relation $\tle$ is transformed to $\tle'$ as defined by (\ref{eq1}); at the end
of this process, we obtain a PR $\cP'=\tuple{P',\tle'}$ which is partition stable 
and if $P$ has been actually refined, i.e.\ $P' \prec P$ 
then the current PR $\cP'$ becomes relation unstable.

\medskip
\noindent
$\RStabilize()$: If $\tuple{P,\tle}$ is not relation stable
then the relation $\tle$ is refined to $\tle'$ 
as described by (\ref{eq2}) as long as a relation refiner for $\cP$ exists; hence, 
at the end of this refinement process 
$\tuple{P,\tle'}$ becomes relation stable but possibly partition unstable. 

\medskip
Moreover, the following properties of the current PR of $\ESim$ hold. 

\begin{lemma}\label{preorder} {In any run of\/} $\ESim$, the following
two conditions hold:
\begin{itemize}
\item[{\rm (i)}] If $\PStabilize()$ is
called on a partial order PR \mbox{$\tuple{P,\tle}$} 
then at the exit we obtain a PR $\tuple{P',\tle'}$ 
which is a preorder. 
\item[{\rm (ii)}]   If $\RStabilize()$ is
called on a preorder PR \mbox{$\tuple{P,\tle}$} 
then at the exit we obtain a PR $\tuple{P,\tle'}$ 
which is a partial order. 
\end{itemize}
\end{lemma}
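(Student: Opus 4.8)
The plan is to prove the two items simultaneously, by induction on the alternation of $\PStabilize()$ and $\RStabilize()$ in a run of \ESim: the first $\PStabilize()$ runs on $\tuple{P_\ell,\id}$, which is a partial order, so item~(i) feeds a preorder to the ensuing $\RStabilize()$, and item~(ii) feeds a partial order back to the next $\PStabilize()$, and so on. For item~(i) I would induct on the number of splits performed in the while-loop: with no split the input partial order is returned, which is a preorder, and for the inductive step it suffices to observe that the reassignment~(\ref{eq1}), $D\tle' E \Leftrightarrow \parent_P(D)\tle\parent_P(E)$, sends preorders to preorders---reflexivity of $\tle'$ follows from $\parent_P(D)\tle\parent_P(D)$ and transitivity from transitivity of $\tle$ on the parent blocks. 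Antisymmetry is genuinely destroyed, since the two halves $B\cap S$ and $B\smallsetminus S$ of a split block share the parent $B$ and hence become mutually $\tle'$-related; this is exactly why the conclusion of~(i) is only ``preorder''.

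For item~(ii), $\RStabilize()$ leaves $P$ fixed and outputs $\tle'=\,\tle\smallsetminus\mathit{Delete}$. I would first record the equivalent description: $(B,D)\in\tle'$ iff $B\tle D$ and, for every block $C$, $B\raee C \Rightarrow D\raee\mu_\cP(C)$ (quantifying over all $C$ rather than only over relation refiners is harmless, since a non-refiner validates the implication for all $\tle$-pairs by definition). Reflexivity is then immediate: reflexivity of $\tle$ gives $C\subseteq\mu_\cP(C)$, so $B\raee C$ entails $B\raee\mu_\cP(C)$ and no pair $(B,B)$ is ever deleted. For transitivity, take surviving pairs $(B,D)$ and $(D,F)$ and any block $C$ with $B\raee C$: survival of $(B,D)$ yields $D\raee\mu_\cP(C)$, hence $D\raee C'$ for some $C'\tre C$; survival of $(D,F)$ applied to $C'$ then forces $F\raee\mu_\cP(C')$, and since $\mu_\cP(C')\subseteq\mu_\cP(C)$ (transitivity of $\tle$) we get $F\raee\mu_\cP(C)$, so $(B,F)$ survives.

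The \emph{main obstacle} is antisymmetry. Suppose, toward a contradiction, that $B\tle' D$ and $D\tle' B$ with $B\neq D$; as $\tle'\subseteq\tle$, also $B\tle D$ and $D\tle B$. Here I would exploit that the preorder $\tle$ fed to $\RStabilize()$ is the output of a $\PStabilize()$ run started on some partial order $\tuple{P_0,\tle_0}$: iterating~(\ref{eq1}) along that run shows $X\tle Y$ iff the $P_0$-ancestors of $X$ and $Y$ are $\tle_0$-related, so antisymmetry of $\tle_0$ forces $B$ and $D$ to descend from one common $P_0$-block. Consequently they are separated at a first split of an ancestor block $A'$ into $A'\cap S$ and $A'\smallsetminus S$, where $S=\pre(\mu_{\cP^*}(C^*))$ for the refiner $C^*$ at the intermediate pair $\cP^*$; writing $M\ud\mu_{\cP^*}(C^*)$ and swapping $B,D$ if needed, we have $B\subseteq\pre(M)$ and $D\cap\pre(M)=\varnothing$. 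The technical ingredient is the $\mu$-invariance already noted in the text, $\mu_{\cP'}(D)=\mu_\cP(\parent_P(D))$, which iterates to $\mu_\cP(\hat C)=\mu_{\cP^*}(X^*)$ whenever a final block $\hat C$ is contained in the $\cP^*$-block $X^*$. Choosing $b\in B$ and $m\in M$ with $b\sra m$ and setting $\hat C\ud P(m)$, its $\cP^*$-block $X^*$ satisfies $C^*\tle^* X^*$ (because $m\in M$), so $\mu_\cP(\hat C)=\mu_{\cP^*}(X^*)\subseteq M$; thus $B\raee\hat C$ while $D\not\!\!\raee M$ gives $D\not\!\!\raee\mu_\cP(\hat C)$, so $(B,D)$ meets the deletion condition for the refiner $\hat C$ and cannot survive, contradicting $B\tle' D$. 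I expect the careful tracking of ancestor blocks across the intermediate partitions, together with the propagation of $\mu$-invariance along them, to be the delicate part of the argument.
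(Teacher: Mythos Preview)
Your proof is correct and follows essentially the same approach as the paper: reflexivity and transitivity for both items are argued identically via parents and the survival characterization of $\tle'$, and antisymmetry in~(ii) is obtained by tracing an offending pair back to the split that first separated them and exhibiting a block (your $\hat C$, the paper's $E$) that witnesses the deletion condition. Your antisymmetry argument is in fact more explicit than the paper's---you carefully iterate the $\mu$-invariance $\mu_{\cP'}(D)=\mu_\cP(\parent_P(D))$ across the several splits of a single $\PStabilize()$ call, whereas the paper writes the argument as though only one split occurred---but the underlying idea is identical.
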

\begin{proof}\let\qed\relax%\renewcommand{\qedsymbol}{}
Let us first consider $\PStabilize()$. Consider an input partial order PR
$\cP=\tuple{P,\tle}$, a splitter $S$ such that $P'=\Split(P,S)$
and let $\tle'$ be defined as in equation~(\ref{eq1}). 
Let us show that $\tuple{P',\tle'}$ is a preorder PR. 
 
\begin{description}
\item[{\rm (Reflexivity):}]
If $B\in P'$ then, as $\tle$ is reflexive, 
$P(B) \tle P(B)$ and thus $B\tle' B$.

\item[{\rm (Transitivity):}] Assume that $B,C,D\in P'$ and 
$B\tle' C$ and $C\tle' D$. Then, 
$P(B)\tle P(C)$ and $P(C)\tle P(D)$, so that by transitivity of $\tle$, 
$P(B)\tle  P(D)$. Hence, $B\tle' D$. 
\end{description}

\noindent 
Let us then take into account $\RStabilize()$, consider an input preorder PR
$\cP=\tuple{P,\tle}$ and let $\tuple{P,\tle'}$ be the output PR of $\RStabilize()$. 

\begin{description}
\item[{\rm (Reflexivity):}] If $B\in P$ then, by reflexivity of $\tle$, 
$B\tle B$. If 
$B\sraee C$, for some $C\in P$, then since $C\tle C$ by reflexivity of $\tle$, 
we have that 
$B\sraee \mu_\cP(C)$. Hence, $B\tle' B$.

\item[{\rm (Transitivity):}] Assume that $B\tle' C$ and $C\tle' D$. Then, 
$B\tle C$ and $C\tle D$, so that by transitivity of $\tle$, $B\tle D$. 
If $B\sraee E$ then, since $B\tle' C$, $C \sraee \mu_\cP(E)$. Hence,
there exists $F\in P$ such that $E\tle F$ and $C\sraee F$. 
Since $C\tle' D$, we have that $D \sraee \mu_\cP(F)$. Since
$\tle$ is transitive and $E\tle F$, $\mu_\cP(F) \subseteq \mu_\cP(E)$. 
Thus, we have shown that $B\sraee E$ implies $D\sraee \mu_\cP(E)$,
namely $B\tle' D$.

\item[{\rm (Antisymmetry):}] We observe that after calling $\PStabilize()$
on a partial order PR, 
antisymmetry can be lost because for any block $B$ which is split 
into $B_1=B\cap S$ and $B_2=B\smallsetminus S$, where $S=\pre(\mu_\cP(C))$, 
we have that $B_1 \tle B_2$ and
$B_2 \tle B_1$. In this case, $\RStabilize()$ removes the pair $(B\cap S,
B\smallsetminus S)$ from the relation $\tle$: in fact, while $B\cap S  \subseteq
\pre(\mu_\cP(C))$ and therefore $B\cap S\sraee E$, for some block $E\subseteq 
\mu_\cP(C)$, we have that $(B\smallsetminus S) \cap  \pre(\mu_\cP(C)) = \varnothing$,
so that, since $\mu_\cP(E)\subseteq \mu_\cP(C)$, 
$(B\smallsetminus S) \cap  \pre(\mu_\cP(E)) = \varnothing$, i.e., 
$B\smallsetminus S\not\!\!\sraee \mu_\cP(E)$, and therefore 
$B\cap S \not\tle'
B\smallsetminus S$. Hence, $\RStabilize()$ outputs a relation $\tle'$  which is
antisymmetric. \hfill\ensuremath{\Box}
\end{description}
\end{proof}

The main loop of $\ESim$ terminates when the current PR $\tuple{P,\tle}$ becomes both
partition and relation stable. By the above Lemma~\ref{preorder}, 
the output PR $\cP$ of $\ESim$ is a partial order, and hence a preorder,  
so that Theorem~\ref{simpr} can be applied to $\cP$ which 
then results to be a simulation PR. 
It turns out that this algorithm is correct, meaning
that the output PR $\cP$ actually represents the simulation preorder.

 \begin{theorem}[\textbf{Correctness}]
   \label{basiccorrect} Let $\Sigma$ be finite. 
$\ESim$ is correct, i.e., $\ESim$ terminates on any input and if $\tuple{P,\tle}$ is the output PR of
$\ESim$ on input $\tuple{P_\ell,\id}$ then for any $s,t\in \Sigma$, 
$s\leq t \;\Lra\; P(s) \tle P(t)$. 
 \end{theorem}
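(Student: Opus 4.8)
The plan is to establish the stated equivalence $s \leq t \Lra P(s) \tle P(t)$ at termination as the conjunction of two inclusions between $\Rel(\cP)$ and $\Rsim$, and to prove termination separately. Soundness, i.e.\ $\Rel(\cP) \subseteq \Rsim$, will come essentially for free from the results already proved: when the main loop exits, its guard guarantees that the current PR $\cP = \tuple{P,\tle}$ is both partition and relation stable, Lemma~\ref{preorder} guarantees that $\cP$ is a preorder PR, and a simple invariant carried along the whole run guarantees condition~(i) of Theorem~\ref{simpr}. Theorem~\ref{simpr} then yields that $\cP$ is a simulation PR, so $\Rel(\cP)$ is a simulation and therefore $\Rel(\cP) \subseteq \Rsim$ because $\Rsim$ is the largest simulation. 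This gives $P(s)\tle P(t) \Ra s \leq t$.

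The core of the argument is completeness, namely the invariant
\[
\Rsim \subseteq \Rel(\cP) \qquad \text{maintained throughout the run.}
\]
Initially $\cP = \tuple{P_\ell,\id}$, so $\Rel(\cP) = \{(s,t) \mid \ell(s)=\ell(t)\}$, and since every simulation satisfies condition~(A), $\Rsim \subseteq \Rel(\cP)$; the same computation shows condition~(i) holds initially. I would then show the two refinement steps preserve both invariants. For $\PStabilize()$ the key observation is that splitting does not change the induced state relation at all: if $P' = \Split(P,S)$ and $\tle'$ is defined by~(\ref{eq1}), then $\parent_P(P'(s)) = P(s)$, whence $P'(s)\tle' P'(t) \Lra P(s)\tle P(t)$, i.e.\ $\Rel(\tuple{P',\tle'}) = \Rel(\cP)$. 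Thus $\PStabilize()$ preserves $\Rsim \subseteq \Rel(\cP)$ trivially, and preserves condition~(i) because related children lie inside related parents and so inherit label equality.

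The only genuinely delicate step, and the one I expect to be the main obstacle, is the preservation of $\Rsim \subseteq \Rel(\cP)$ under $\RStabilize()$. Here it must be shown that every pair $(B,D)$ deleted by~(\ref{eq2}) contains no simulation-related states, i.e.\ $b \not\leq d$ for all $b\in B$, $d\in D$; since $\RStabilize()$ only deletes such products $B\times D$, the invariant is then preserved. Fix a deleted pair, so $B\raee C$, $B\tle D$ and $D\nraee \mu_\cP(C)$. I would first exploit the precondition that $\cP$ is partition stable: since $\tle$ is reflexive we have $C\subseteq \mu_\cP(C)$, so $B\raee C$ forces $B\cap\pre(\mu_\cP(C))\neq\varnothing$, and stability ($C\notin\PRefiner(\cP)$) upgrades this to $B\subseteq \pre(\mu_\cP(C))$. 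Now suppose for contradiction $b\leq d$ for some $b\in B$, $d\in D$. Then $b$ has a transition into $\mu_\cP(C)$, say $b\sra t$ with $t\in\mu_\cP(C)$, i.e.\ $C\tle P(t)$; as $b\leq d$, the simulation property gives $t'$ with $d\sra t'$ and $t\leq t'$; the invariant $\Rsim\subseteq\Rel(\cP)$ at entry of $\RStabilize()$ yields $P(t)\tle P(t')$, and transitivity of $\tle$ (available since, by Lemma~\ref{preorder}, $\RStabilize()$ runs on a preorder PR) gives $C\tle P(t')$, so $t'\in\mu_\cP(C)$. But then $d\in\pre(\mu_\cP(C))$, contradicting $D\nraee\mu_\cP(C)$. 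Condition~(i) is preserved by $\RStabilize()$ at once, as it only removes pairs from $\tle$.

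Finally, for termination I would use a monotone progress measure. Along the whole run $|P|$ is non-decreasing (only $\PStabilize()$ changes $P$, and only by splitting) while $|\Rel(\cP)|$ is non-increasing (by the observation above $\PStabilize()$ leaves $\Rel(\cP)$ unchanged, and $\RStabilize()$ only deletes pairs). Inspecting the flag bookkeeping in the main loop, any iteration in which neither $P$ is actually refined nor $\tle$ is actually shrunk sets both stability flags and exits; hence every non-final iteration strictly increases $|P|$ or strictly decreases $|\Rel(\cP)|$. As $1\leq |P|\leq |\Sigma|$ and $0\leq|\Rel(\cP)|\leq|\Sigma|^2$ with the stated monotonicities, only finitely many iterations are possible, so $\ESim$ terminates. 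Combining the two inclusions gives $\Rel(\cP)=\Rsim$, i.e.\ the claimed equivalence.
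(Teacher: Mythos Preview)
Your proposal is correct and follows essentially the same approach as the paper's proof: termination via a monotone progress argument, soundness by combining partition/relation stability with Lemma~\ref{preorder} and Theorem~\ref{simpr}, and completeness via the invariant $\Rsim \subseteq \Rel(\cP)$, where the only nontrivial preservation step (for $\RStabilize()$) is handled by exactly the partition-stability-plus-transitivity chase you describe. The differences are purely stylistic (you phrase the $\RStabilize()$ step contrapositively and use $|\Rel(\cP)|$ rather than $\tle$ itself as the shrinking component of the termination measure), not substantive.
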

\begin{proof}
  Let us first note that $\ESim$ always terminates. In fact, if
  $\tuple{P,\tle}$ is the current PR at the beginning of some
  iteration of the while-loop of $\ESim$ and $\tuple{P',\tle'}$ is the
  current PR at the beginning of the next iteration then, since
  $\tuple{P',\tle'}$ is either partition or relation unstable, we have
  that either $P' \tl P$ or $P'=P$ and $\tle' \, \subsetneq \,
  \tle$. Since the state space $\Sigma$ is finite, at some iteration
  it must happen that $P'=P$ and $\tle' \, =\, \tle$ so that
  $\text{PStable} \:\&\: \text{RStable}=\ctt$.

  When $\ESim$ terminates, we have that $\RRefiner(\tuple{P,\tle})
  =\varnothing = \PRefiner(\tuple{P,\tle})$. Also, 
  let us observe that condition~(i)
  of Theorem~\ref{simpr} always holds for the current PR
  $\tuple{P,\tle}$ because the input PR $\tuple{P_\ell,\id}$ initially
  satisfies condition~(i) and this condition is clearly preserved at any
  iteration of $\ESim$. Furthermore, at the beginning, 
  we have that $\tuple{P,\tle}=
\tuple{P_\ell,\id}$ and this is trivially a partial
order. Thus, we can apply Lemma~\ref{preorder} for any call to $\PStabilize()$
and $\RStabilize()$, so that we obtain 
that  the output PR $\tuple{P,\tle}$ is a preorder. Hence, 
Theorem~\ref{simpr} can be applied to the output preorder 
PR $\tuple{P,\tle}$, which  is then a
  simulation. Thus, $\Rel(\tuple{P,\tle})\subseteq \Rsim$.

Conversely, let us show that if $\cP$ is the output PR of $\ESim$ then
$\Rsim \subseteq \Rel(\cP)$. This is shown as follows:
if $\cP$ is a
preorder PR such that $\Rsim \subseteq \Rel(\cP)$ and 
$\RStabilize()$ or $\PStabilize()$ are 
called on $\cP$ 
then at the exit we obtain a PR $\cP'$ 
such that $\Rsim \subseteq \Rel(\cP')$.

Let us first take into account $\RStabilize()$, consider an input preorder PR
$\cP=\ok{\tuple{P,\tle}}$ such that $\Rsim \subseteq \Rel(\cP)$,
and let $\cP'=\tuple{P,\tle'}$ be the output PR of $\RStabilize()$. We show that 
$\Rsim \subseteq \Rel(\cP')$, that is, for any $s,t\in \Sigma$, 
if $s\leq t$ then $P(s) \tle' P(t)$.  
By hypothesis, from $s\leq t$ we obtain $P(s) \tle P(t)$. Assume that
$P(s) \sraee C$, for some $C\in P$. Hence, $P(s) \sraee \mu_\cP(C)$. 
Since the PR $\cP$ is partition stable, we have that $P(s) \subseteq
\pre(\mu_\cP(C))$. Thus, there exists some $D\in P$ and $d\in D$ such
that $C\tle D$ and $s \sra d$. Therefore, since $t$ simulates $s$, there
exists some state $e$ such that $t\sra e$ and $d\leq e$. 
By hypothesis, from $d\leq e$ we obtain $D=P(d) \tle P(e)$. 
Hence, from $C\tle D$ and $D\tle P(e)$, since $\tle$ is transitive,
we obtain $C\tle P(e)$. Thus, $t\sraee \mu_\cP(C)$ and in turn $P(t) \sraee
\mu_\cP(C)$. We can thus conclude that $P(s) \tle' P(t)$. 

Let us now consider $\PStabilize()$. Consider an input preorder PR
$\cP=\tuple{P,\tle}$ (which, by Lemma~\ref{preorder}, actually
is a partial order PR) such that $\Rsim \subseteq \Rel(\cP)$.
Consider a splitter $S$ such that $P'=\Split(P,S)$
and let $\tle'$ be defined as in equation~(\ref{eq1}). 
Let $\cP'=\tuple{P',\tle'}$ and let us check that 
$\Rsim \subseteq \Rel(\cP')$, i.e., if $s\leq t$ then $P'(s)\tle' P'(t)$. 
By hypothesis, if $s\leq t$ then $P(s) \tle P(t)$. Moreover, 
by definition of $\tle'$ and since $P'\preceq P$, 
$P(s) \tle P(t)$ iff
$P'(s) \tle' P'(t)$. 

To sum up, we have shown that for the output PR $\tuple{P,\tle}$, 
$\Rsim = \Rel(\tuple{P,\tle})$, so that $s\leq t$ iff $P(s) \tle P(t)$.   
\end{proof}

\section{Efficient Implementation} 
\label{secimplementation}

\subsection{Data Structures}

$\ESim$ is implemented by relying on 
the following data structures.

\medskip
\noindent
\textbf{\emph{States:}} A state $s$ is represented by a record  that contains 
the list  $\post(s)$ of its  successors, a pointer $s$.block to the block $P(s)$
  that contains $s$ and a boolean flag used for marking purposes.  
  The whole state space $\Sigma$
  is represented as a doubly linked list of states. 
  $\{\post(s)\}_{s\in \Sigma}$ 
therefore represents the input transition system.  
  
\medskip
\noindent
\textbf{\emph{Partition:}}
The states of any block $B$ of the current partition
  $P$ are consecutive in the list $\Sigma$, so that $B$ is represented
  by two pointers begin and end: $B.\text{begin}$ is 
  the first state of $B$ in $\Sigma$ and $B.\text{end}$ is the
  successor of the last state of $B$ in $\Sigma$, i.e.,
  $B=[B.\text{begin},B.\text{end}[$.  Moreover, $B$ stores a
  boolean flag $B$.intersection and a block
  pointer
  $B$.brother whose meanings are as follows: after a
  call to $\mathit{Split}(P,S)$ for splitting $P$ w.r.t.\ a set of
  states $S$, if 
  $B_1 = B\cap S$ and $B_2 = B\smallsetminus S$, for
  some $B\in P$ that has been split by $S$ then 
  $B_1$.intersection $= \ctt$ and $B_2$.intersection $=\cff$, while 
$B_1$.brother points to $B_2$ and $B_2$.brother points to $B_1$. If instead
$B$ has not been split by $S$ then $B$.intersection
$=\cnull$ and $B$.brother $=\cnull$.
 Also, any block $B$ stores in $\Remove(B)$ a list
  of blocks of $P$, which is used by $\RStabilize()$, 
  and in $B.$preE the list of blocks $C\in P$ such
  that $C\raee B$.  Finally, any block $B$ stores  in
$B.\text{size}$ the size of $B$, in $B.\text{count}$ an integer counter bounded 
by $|P|$ which is used by $\PStabilize()$ and a pair
of boolean flags used for marking purposes.
The current partition $P$ is stored as a doubly linked list of blocks.  

\medskip
\noindent
\textbf{\emph{Relation:}}
The current relation $\tle$ on $P$ is stored as a
  resizable $|P|\times|P|$ boolean matrix. 
  Recall  \cite[Section~17.4]{cormen} that insert
  operations in a resizable array (whose capacity is doubled as
  needed) take amortized constant time and that a resizable matrix (or
  table) can be implemented as a resizable array of resizable arrays.  
  The boolean matrix $\tle$ is resized
  by adding a new entry to $\tle$, namely a new row and a new column,
  for any block $B$ that is split into two new blocks $B\smallsetminus
  S$ and $B\cap S$. The old entry $B$ becomes the entry for the
  new block $B\smallsetminus S$ while the new entry is used for the
  new block $B\cap S$. 
 
\medskip 
\noindent
\textbf{\emph{Auxiliary Data Structures:}} We store and maintain a
 resizable boolean matrix BCount and a resizable integer matrix Count,
 both indexed over $P$, whose meanings are as follows: 

\begin{itemize}
\item[] $\text{BCount}(B,C)\ud \left\{
  	\begin{array}{ll}
	1 & \mbox{ if $B\sraee C$} \\
	0 &\mbox{ if $B\not\!\!\sraee C$}
	\end{array}
	\right.$
\item[]
	$\text{Count}(B,C) \ud
      \textstyle{\sum}_{E\tre C} \text{BCount}(B,E)$
\end{itemize}

\noindent
Hence, $\text{Count}(B,C)$ stores the number of blocks $E$ 
such that $C\tle E$ and $B\sraee E$.  
The table Count allows to implement the test 
$B \nraee \pre(\mu_\cP (C))$ in constant time  as $\text{Count}(B,C)=0$.

\begin{algorithm}[t]
\footnotesize 
\SetAlgoVlined
\SetNlSkip{1.5em}
\SetAlTitleFnt{textsc}

\Indm
\FuncSty{{\rm $\textit{Initialize}()\;\{$} }

\Indp
\tcp*[h]{{\rm \text{Initialize $\text{BCount}$}}}

  \ForAll{$B\in P$}{
  	\lForAll{$C\in P$}{
    	    $\text{BCount}(B,C):=0$\;
        	}
        }

  \ForAll{$B\in P$}{ 
  	\ForAll{$x\in B$}{ 
		\ForAll{$y\in \post(x)$}{
        \lIf{{\rm $(\text{BCount}(B,y.\text{block})=0)$}}{$\text{BCount}(B,y.\text{block}):=1$\;}
      }
      }
      }

\tcp*[h]{{\rm \text{Initialize $\text{preE}$}}}

\textit{updatePreE}(); \tcp*[h]{{\rm \text{In Figure~\ref{updateFig}}}}

\BlankLine

\tcp*[h]{{\rm \text{Initialize $\text{Count}$}}}

\ForAll{$B\in P$}{ 
	\lForAll{$C\in P$}{
        $\text{Count}(B,C):=0$\;
        }
        }

\ForAll{$D\in P$}{ 
	\ForAll{{\rm $B\in D.$preE}}{
		\lForAll{$C\in P$ \KwSty{such that} $C\tle D$}{$\text{Count}(B,C)$++\;}
	}
	}

\tcp*[h]{{\rm \text{Initialize $\Remove$}}}

\ForAll{$C\in P$}{
	\ForAll{$B\in P$}{ 
		\ForAll{{\rm $D\in B$.preE}}{
			\lIf{{\rm $(\text{Count}(D,C)=0)$}}{$\Remove(C).\text{append}(D)$\;}	
	}
	}
	}
\Indm
$\}$
%%\FuncSty{\}}

\caption{Initialization of data structures.}\label{initializeFig}
\end{algorithm}

\medskip
The data structures BCount, preE, Count and $\Remove$ 
are initialized by a function 
$\textit{Initialize()}$ at line~2 of $\ESim$, which is described in
Figure~\ref{initializeFig}.

\begin{algorithm}[t]
\footnotesize
\SetAlgoVlined
\SetNlSkip{1.5em}
\SetAlTitleFnt{textsc}

\Indm

\FuncSty{{\rm \clist{\cstate} $\pre\!\mu(\cblock~C)\;\{$} }

\Indp
\clist{\cstate} $S := \varnothing$\;

\ForAll{$x\in \Sigma$}{ 
	\ForAll{$y\in \post(x)$}{
    \lIf{{\rm $(C\tle y.\text{block} \;\&\; \text{unmarked}(x))$}}{
        $\{S.\text{append}(x)$;~ $\text{mark}(x)$;$\}$
      }
    }
    }
\lForAll{$x\in S$}{$\text{unmark}(x)$\;} 

\creturn $S$\;

\Indm
$\}$

\BlankLine
%\BlankLine

\clist{\cblock} \FuncSty{$\mathit{Split}($\clist{\cstate} $S) \;\{$}

\Indp
\clist{\cblock} split\;
\lForAll{$B\in P$}{$B$.intersection $:=\cnull$\;}

\ForAll{$x \in S$}{
  \If{{\rm $(x.\text{block}.\text{intersection} = \cnull)$}}{
    $x.\text{block}.\text{intersection} := \cff$\;
    \cblock  ~$B$ $:=$ new \cblock;~ 
    $x$.block.brother $:= B$\;
    $B$.brother $:=$ $x$.block;~ 
    $B$.intersection $:=\ctt$\;
    split.append($x$.block)\;
  }
  move $x$ in list $\Sigma$ from $x$.block at the end of $B$\;
  \If(\tcp*[h]{{\rm $x$.block $\subseteq S$}}){{\rm $(x$.block $= \varnothing)$}}{
   $x$.block.begin $:= B$.begin;~
   $x$.block.end $:= B$.end\;
    $x$.block.brother $:= \cnull$;~ 
    $x$.block.intersection $:= \cnull$\;
    split.remove($x$.block);~ delete $B$\;
 }
}

\creturn split\;

\Indm
$\}$
 
\caption{{\rm Split algorithm.}}\label{SplitFig}
\end{algorithm}

 \subsection{Partition Stability}
 \label{secfindrefiner}

Our implementation of $\ESim$ will exploit
the following logical characterization of partition refiners. 

 \begin{theorem}
   \label{refiner}
   Let $\tuple{P,\tle}$ be a partial order PR. Then,
   $\PRefiner(\ok{\tuple{P,\tle}})\neq \varnothing$ 
   iff there exist $B,C\in P$ such that the
   following three conditions hold: 
   \begin{itemize}
   \item[{\rm (i)}] $\ok{B\raee C}$;
   \item[{\rm (ii)}] for any $C' \in P$, if $C\tl C'$ then 
     $\ok{B\not\!\!\raee C'}$;
   \item[{\rm (iii)}] $B\not\subseteq\pre(C)$.
   \end{itemize}  
 \end{theorem}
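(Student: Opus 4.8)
The plan is to prove both implications directly from the definition of partition refiner, namely that $C$ is a partition refiner iff there is a block $B$ with $B\cap\pre(\mu_\cP(C))\neq\varnothing$ and $B\smallsetminus\pre(\mu_\cP(C))\neq\varnothing$. Throughout I will use the identity $\mu_\cP(C)=\cup\{E\in P~|~C\tle E\}$, the reflexivity, transitivity and antisymmetry of the partial order $\tle$, and the elementary observation that $B\cap\pre(E)\neq\varnothing$ iff $B\raee E$, since both say that some state of $B$ has a successor in $E$.

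For the $(\La)$ direction I would simply verify that the block $C$ supplied by conditions (i)--(iii) is itself a partition refiner, witnessed by the block $B$. The nonemptiness $B\cap\pre(\mu_\cP(C))\neq\varnothing$ is immediate: condition (i) gives $b\in B$ and $c\in C$ with $b\sra c$, and since $C\tle C$ by reflexivity we have $c\in\mu_\cP(C)$, hence $b\in B\cap\pre(\mu_\cP(C))$. For the complementary part I would take the state $b'\in B$ with $b'\notin\pre(C)$ provided by (iii) and argue $b'\notin\pre(\mu_\cP(C))$ by contradiction: a successor $t$ of $b'$ lying in $\mu_\cP(C)$ would sit in some block $E\tre C$ with $B\raee E$; the case $E=C$ contradicts $b'\notin\pre(C)$, while the case $C\tl E$ contradicts condition (ii). Hence $b'\in B\smallsetminus\pre(\mu_\cP(C))$, so $C$ is a partition refiner and $\PRefiner(\cP)\neq\varnothing$.

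For the $(\Ra)$ direction I start from blocks $B,C_0$ witnessing $C_0\in\PRefiner(\cP)$. From $B\cap\pre(\mu_\cP(C_0))\neq\varnothing$ together with the identity for $\mu_\cP$, the family $\cF\ud\{E\in P~|~C_0\tle E,\ B\raee E\}$ is nonempty. Since $P$ is finite and $\tle$ is a partial order, I may choose $C\in\cF$ that is $\tl$-maximal in $\cF$, and I claim that this $C$ together with $B$ satisfies (i)--(iii). Condition (i) holds because $C\in\cF$. For (ii), if $C\tl C'$ and $B\raee C'$, then transitivity gives $C_0\tle C'$, so $C'\in\cF$, contradicting the maximality of $C$. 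For (iii) I use the state $b'\in B$ with $b'\notin\pre(\mu_\cP(C_0))$ coming from $B\smallsetminus\pre(\mu_\cP(C_0))\neq\varnothing$: since $C_0\tle C$ we have $C\subseteq\mu_\cP(C_0)$, whence $\pre(C)\subseteq\pre(\mu_\cP(C_0))$ and therefore $b'\notin\pre(C)$, giving $B\not\subseteq\pre(C)$.

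The only genuinely delicate point is the choice of $C$ in the $(\Ra)$ direction: one must replace the arbitrary refiner $C_0$ by a $\tl$-maximal block above it that is still reachable from $B$, since it is exactly this maximality that activates condition (ii) by ruling out successors escaping into strictly higher blocks. Everything else reduces to unfolding the definition of $\mu_\cP$ and invoking reflexivity, transitivity and finiteness of the poset $\tuple{P,\tle}$.
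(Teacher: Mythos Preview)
Your proof is correct. The $(\La)$ direction matches the paper's argument, only spelled out in more detail.

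For the $(\Ra)$ direction you take a genuinely different route from the paper. The paper picks $C$ as a $\tle$-maximal element of $\PRefiner(\cP)$ itself, then chooses any witnessing block $B$; condition~(ii) is then obtained by arguing that any $C'\tr C$ is \emph{not} a partition refiner, so $B\raee C'$ would force $B\subseteq\pre(\mu_\cP(C'))\subseteq\pre(\mu_\cP(C))$, a contradiction. You instead fix an arbitrary refiner $C_0$ together with its witness $B$, and then replace $C_0$ by a $\tle$-maximal element $C$ of the smaller family $\cF=\{E\mid C_0\tle E,\ B\raee E\}$. Your argument for~(ii) is then a one-liner from maximality in $\cF$, with no need to reason about whether $C'$ is a refiner; the price is that you must check~(iii) via the inclusion $C\subseteq\mu_\cP(C_0)$. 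Both arguments are short and rely on finiteness of $P$ to extract a maximal element; the paper's choice has the mild conceptual advantage that the selected $C$ is visibly a partition refiner from the outset, whereas in your version this only follows a posteriori from the $(\La)$ direction.
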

\begin{proof}
Let $\cP = \tuple{P,\tle}$. 

\noindent
($\La$) 
{}From condition~(i) we have that 
$B\cap \pre(\mu_\cP(C))\neq \varnothing$. From conditions~(ii) and (iii), 
$B\not\subseteq
\pre(\mu_\cP(C))$. Thus, $C\in \PRefiner(\cP)$.

\noindent
($\Ra$)
Assume that $\PRefiner(\cP)\neq \varnothing$. Since $\tuple{P,\tle}$
is a partial order, we consider a partition refiner 
$C\in \max (\PRefiner(\cP))$  which is maximal w.r.t.\ the partial order
$\tle$. Since $C$ is a partition refiner, there exists some $B\in P$
such that 
$B\cap \pre(\mu_\cP(C))\neq \varnothing$ and $B\not\subseteq
\pre(\mu_\cP(C))$. If $C'\in P$ is such that $C\tl C'$ then 
$C'$ cannot be a partition refiner because $C$ is a maximal partition refiner. 
Hence, if $B\sraee C'$ then $B\subseteq \pre(\mu_\cP(C'))$, because
$C'$ is not a partition refiner, so that, since $\pre(\mu_\cP(C'))
\subseteq \pre(\mu_\cP(C))$,
 $B\subseteq \pre(\mu_\cP(C))$, which is a contradiction. Hence, for
 any $C'\in P$ if $C\tl C'$ then  
 $B\not\!\!\sraee C'$. Therefore, from $B\cap
 \pre(\mu_\cP(C))\neq \varnothing$ we obtain that $B\sraee C$. Moreover,
 from $B\not\subseteq
\pre(\mu_\cP(C))$ we obtain that $B\not\subseteq \pre(C)$. 
\end{proof}

\begin{algorithm}[t]
\footnotesize
\SetAlgoVlined
\SetNlSkip{1.5em}
\SetAlTitleFnt{textsc}

\Indm
\FuncSty{{\rm $\KwSty{bool}~\PStabilize ()\;\{$}}

\Indp
\clist{\cblock} $\textit{split} := \varnothing$\;

\While{{\rm $(C:=\mathit{FindPRefiner}()) \neq \cnull$)}}{
  \clist{\cstate} $S := \pre\!\mu(C)$;~ $\textit{split} := \textit{Split}(S)$\;
  \textit{updateRel}($\textit{split}$);~  
  \textit{updateBCount}($\textit{split}$);~
  \textit{updatePreE}()\;
  \textit{updateCount}($\textit{split}$);~
  \textit{updateRem}($\textit{split}$)\;
}

\creturn $(\textit{split}= \varnothing)$\;

\Indm
$\}$

\BlankLine
\BlankLine

\FuncSty{{\rm $\KwSty{Block}~\mathit{FindPRefiner} ()\;\{$}}

\Indp

\ForAll{$B\in P$} {
  \clist{\cblock} $p := \mathit{Post}(B)$\;
  \ForAll{{\rm $C \in p$}}{
    \lIf{{\rm $(\Count(B,C)=1)$}}{\creturn $C$\;}
  }
}
\creturn \cnull\;

\Indm
$\}$

\BlankLine
\BlankLine

\FuncSty{{\rm \clist{\cblock} $\mathit{Post}(\cblock~B)\;\{$} }

\Indp

\clist{\cblock} $p := \varnothing$\;

  \ForAll{$b \in B, $}{
    \ForAll{$c \in \post(b)$}{
      \cblock~ $C:=c.\text{block}$\;
      \lIf{{\rm $\text{unmarked1}(C)$}}{
      $\{$mark1($C$);
      $C.\text{count}=0$;
       $p$.append($C$);$\}$
      }
      
      \lIf{{\rm $\text{unmarked2}(C)$}}{
        $\{$mark2($C$);
        $C$.count++;$\}$
      }
    }
    \lForAll{{\rm $C \in p$}}{
      unmark2($C$)\;
    }    
  }
  \ForAll{{\rm $C \in p$}}{ 
  unmark1($C$)\;
  \lIf{{\rm $(C.\text{count} = B.\text{size})$}}
     {$p$.remove($C$);\}}
     }
  
  \creturn $p$\;

\Indm
$\}$

\caption{{\rm $\PStabilize()$} Algorithm.}\label{pstabilizeFig}
\end{algorithm}

Notice that this characterization of partition refiners 
requires that the current PR is a partial order relation and, by Lemma~\ref{preorder},
for any call
to $\PStabilize()$,
this is  actually guaranteed by  the  $\ESim$ algorithm.

The algorithm in Figure~\ref{pstabilizeFig} is an implementation of 
the $\PStabilize()$ function that relies on Theorem~\ref{refiner}
and on the above data structures. The function 
$\mathit{FindPRefiner()}$
implements the conditions of Theorem~\ref{refiner}: it 
returns a partition refiner for the current PR $\cP=\tuple{P,\tle}$ 
when this exists, otherwise it returns a null
pointer. Given a block $B\in P$, the function $\mathit{Post}(B)$
returns a list of blocks $C\in P$ that satisfy conditions~(i) and~(iii) of
Theorem~\ref{refiner}, i.e., those blocks $C$ such that $B\raee C$ and
$B\not\subseteq \pre(C)$. This is accomplished through the counter 
$C.\text{count}$ that at the exit of the for-loop at lines~18-23 in Figure~\ref{pstabilizeFig} stores 
the number of states in $B$ having (at least) an outgoing transition to $C$,
i.e., $C.\text{count} = |B\cap \pre(C)|$. Hence, we have that:
$$
B\sraee C \text{~and~} 
B\not\subseteq \pre(C) \,\Lra\,
1 \leq C.\text{count} < B.\text{size}.
$$
Then, for any candidate partition refiner $C\in \mathit{Post}(B)$, it remains
to check condition~(ii) of Theorem~\ref{refiner}. This condition is
checked in $\mathit{FindPRefiner}()$ 
by testing whether $\text{Count}(B,C)=1$: this is correct
because $\text{Count}(B,C)\geq 1$ holds since $C\in \mathit{Post}(B)$ and therefore 
$B\sraee C$, so that
$$\text{Count}(B,C) =1 \;\text{~iff~}\; 
\forall C' \in P. C\tl C' \Ra B\not\!\!\sraee C'.$$
Hence, if $\text{Count}(B,C)=1$ holds at line~13 of $\mathit{FindPRefiner}()$, 
by Theorem~\ref{refiner}, $C$ is a partition refiner. 
Once a partition refiner $C$ has been returned by $\mathit{Post}(B)$,
$\PStabilize()$ splits the current partition $P$ w.r.t.\ the splitter
$S= \pre(\mu_\cP (C))$ by calling the function 
$\mathit{Split}(S)$, 
updates the relation $\tle$ as defined by equation~(\ref{eq1}) in
Section~\ref{secbasicsa} by calling $\mathit{updateRel}()$, 
updates the 
data structures BCount, preE, Count and $\Remove$, and 
then check again whether a partition refiner exists. 
At  the exit of the main while-loop of $\PStabilize()$, the current 
PR $\tuple{P,\tle}$ is partition stable.

$\PStabilize()$ calls the functions $\pre\!\mu()$ and 
$\mathit{Split}()$ in Figure~\ref{SplitFig}.
Recall that the states of a block $B$ of 
$P$ are consecutive in the list of states $\Sigma$, so that $B$ is represented
as $B=[B.\text{begin},B.\text{end}[$.
The implementation of 
$\mathit{Split}(S)$ is 
quite standard (see e.g.\
\cite{gv90,rt10}): this is based on a linear 
scan of the states in $S$ and for each state in $S$ performs
some constant time operations. Hence, 
$\mathit{Split}(S)$ takes $O(|S|)$
time. Also, $\mathit{Split}(S)$ returns the list $\textit{split}$ of blocks $B\smallsetminus S$ such that 
$\varnothing \subsetneq B\smallsetminus S \subsetneq B$ (i.e., $B$.intersection $=\cff$).
Let us remark that a call $\mathit{Split}(S)$ may affect the ordering of the
states in the list 
$\Sigma$ because states are moved 
from old blocks to newly generated
blocks. 

We will show that the overall time complexity of $\PStabilize()$ 
along a whole
run of $\ESim$ is in 
$O(|\Psim||\sra|)$.

\begin{algorithm}
\footnotesize 
\SetAlgoVlined
\SetNlSkip{1.5em}
\SetAlTitleFnt{textsc}

\Indm
\FuncSty{{\rm $\textit{updateRel}($\clist{\cblock} $\textit{split})\;\{$} }

\Indp
\lForAll{$B\in \textit{split}$}{addNewEntry$(B)$ in matrix $\tle$\;}

\ForAll{$B \in P$}{ 
	\ForAll{$C \in \textit{split}$} {
  \lIf{{\rm $(B$.intersection $= \ctt)$}} {
	$B \tle C := B.\text{brother} \tle C.\text{brother}$\;
	}
	\lElse{$B \tle C := B \tle C.\text{brother}$\;}
	}
	}
	
\ForAll{$C \in P$}{ 
	\ForAll{$B \in \textit{split}$} {
  \lIf{{\rm $(C$.intersection $= \cff)$}} {
	$B \tle C := B.\text{brother} \tle C$\;
	}
	}
	}
	
\Indm
$\}$

\BlankLine
%\BlankLine

\FuncSty{{\rm $\textit{updateBCount}($\clist{\cblock} $\textit{split})\;\{$} }

\Indp
\lForAll{$B\in \textit{split}$}{addNewEntry$(B)$ in matrix Count\;}

  \ForAll{$B\in P$}{ 
  	\ForAll{$x\in B$}{
		\lForAll{$y\in \post(x)$}{
        $\text{BCount}(B,y.\text{block}):=0$\;
        }
      }
}

  \ForAll{$B\in P$}{
  \ForAll{$x\in B$}{
  	\ForAll{$y\in \post(x)$}{
        \lIf{{\rm $(\text{BCount}(B,y.\text{block})=0)$}}{$\text{BCount}(B,y.\text{block}):=1$\;}
      }
      }
      }
  
\Indm
$\}$

\BlankLine
%\BlankLine

\FuncSty{{\rm $\textit{updatePreE}()\;\{$} }

\Indp

\lForAll{$B \in P$}{$B.\text{preE}:=\varnothing$\;}

\ForAll{$B\in P$}{ 
	\ForAll{$x \in B$}{
		\lForAll{$y\in \post(x)$}{
      $\{\text{unmark}(B)$;~
      $y.\text{block}.\text{preE}.\text{append}(B)$;$\}$
    }
    }
    }

\ForAll{$C\in P$}{
  \ForAll{{\rm $B\in C.\text{preE}$}}{
    \lIf{{\rm $\text{unmarked}(B)$}}{$\text{mark}(B)$\;}
    \lElse{$C.\text{preE}.\text{remove}(B)$\;}
  }
  \lForAll{{\rm $B\in C.\text{preE}$}}{$\text{unmark}(B)$\;}
}

\Indm
$\}$

\BlankLine
%\BlankLine

\FuncSty{{\rm $\textit{updateRem}($\clist{\cblock} $\textit{split})\;\{$} }

\Indp

\lForAll{$B\in \textit{split}$}{
  $\Remove(B) := \Remove(B.\text{brother})$\;
  }

\Indm
$\}$

\caption{Update functions.}\label{updateFig}
\end{algorithm}

\subsection{Updating Data Structures}\label{uds}

In the function $\PStabilize()$, after calling $\Split(S)$, 
firstly we need to update the boolean matrix that stores
the relation $\tle$ in accordance with definition ($\ref{eq1}$)
in Section~\ref{secbasicsa}. After that, since both $P$ and $\tle$ are changed
we need to update the data structures BCount, preE, Count and $\Remove$.   
The implementations of the functions $\textit{updateRel}()$, $\textit{updateBCount}()$, 
$\textit{updatePreE}()$ and $\textit{updateRem}()$ are quite straightforward and are
described in Figure~\ref{updateFig}.

The function $\textit{updateCount}()$ 
is in Figure~\ref{updateCountFig} and 
deserves special
care in order to design a time efficient 
implementation. The core of the $\textit{updateCount}()$ 
algorithm 
follows Hopcroft's 
``process the smaller half'' principle \cite{hop71} 
for updating the integer matrix
Count. Let $P'$ be the partition which is obtained by splitting the
partition $P$ w.r.t.\ the splitter $S$. 
Let $B$ be a block of $P$ that has been split into $B\cap S$ and
$B\smallsetminus S$. Thus, we need to update
$\text{Count}(B\cap S, C)$ and 
$\text{Count}(B\smallsetminus S, C)$ for any $C\in P'$
 by knowing $\text{Count}(B,\parent_P(C))$. Let us first observe that 
after lines~3-10 of $\mathit{updateCount}()$, we have that for any $B,C\in P'$, 
$\text{Count}(B,C) = \text{Count}(\parent_P(B),\parent_P(C))$.
Let $X$ be the block in $\{B\cap S,B\smallsetminus S\}$
with the smaller size, and let $Z$ be the other block, so that $|X|\leq |B|/2$
and $|X|+|Z|=|B|$. Let $C$ be any block in $P'$. 
We set $\text{Count}(X,C)$ to $0$, while 
$\text{Count}(Z,C)$ is left unchanged, namely 
$\text{Count}(Z,C) = \text{Count}(B,C)$. 
We can correctly update both $\text{Count}(Z,C)$ and $\text{Count}(X,C)$ 
by just scanning all the outgoing transitions from
$X$. In fact, if $x\in X$, $x\sra y$ and the block
$P(y)$ is scanned for the first time then for all 
$C\tle P(y)$, $\text{Count}(X,C)$
is incremented by $1$,  
while if \mbox{$Z \not\!\!\sraee P(y)$}, i.e.\ $\text{BCount}(Z,P(y))=0$, 
then $\text{Count}(Z,C)$ is decremented by $1$. 
The correctness of this procedure goes as follows: 
\begin{itemize}
\item[{\rm (1)}] At the end, $\text{Count}(X,C)$ is clearly correct
because its value has been re-computed from scratch.
\item[{\rm (2)}] At the end, $\text{Count}(Z,C)$ is correct because
 $\text{Count}(Z,C)$ initially stores the value $\text{Count}(B,C)$, and 
 if there exists some block $D$ such that $C\tle D$, $B\sraee D$ whereas 
 \mbox{$Z\not\!\!\sraee D$}~---~this is correctly 
 implemented at line~28 as $\text{BCount}(Z,D)=0$, since
 the date structure BCount is up to date~---~then necessarily $X\sraee D$, because $B$ has been split into 
 $X$ and $Z$, so that $D=P(y)$ for some $y\in \post(X)$, namely
 $D$ has been taken into account by some increment 
 $\text{Count}(X,C)$++ 
 and consequently $\text{Count}(Z,C)$ is decremented by $1$ at line~28.
\end{itemize} 

\begin{algorithm}[t]
\footnotesize 
\SetAlgoVlined
\SetNlSkip{1.5em}
\SetAlTitleFnt{textsc}

\Indm

\tcp*[h]{{\rm \text{Precondition: $\text{BCount}$ and preE are updated with the current PR}}} 

\FuncSty{{\rm $\textit{updateCount}($\clist{\cblock} $\textit{split})\;\{$} }

\Indp
\lForAll{$B\in \textit{split}$}{addNewEntry$(B)$ in matrix Count\;}

\ForAll{$B \in P$}{ 
	\ForAll{$C \in \textit{split}$} {
  \lIf{{\rm $(B$.intersection $= \ctt)$}} {
	$\text{Count}(B,C) := \text{Count}(B.\text{brother},C.\text{brother})$\;
	}
	\lElse{$\text{Count}(B,C) := \text{Count}(B,C.\text{brother})$\;}
	}
	}
	
\ForAll{$C \in P$}{ 
	\ForAll{$B \in \textit{split}$} {
  \lIf{{\rm $(C$.intersection $= \cff)$}} {
	$\text{Count}(B,C) := \text{Count}(B.\text{brother},C)$\;
	}
	}
	}

\lForAll{$C\in P$}{unmark$(C)$\;}

\ForAll{{\rm $B \in \textit{split}$}}{
  \tcp*[h]{{\rm \text{Update $\text{Count}(B,\cdot)$ and $\text{Count}(B.\text{brother},\cdot)$}}}

  \KwSty{Block} $X$, $Z$\;
  \eIf{{\rm ($B.\text{size} \leq B.\text{brother}.\text{size}$)}}{
    \{$X:=B$; $Z := B.\text{brother}$;\}
 }
{
    \{$X:=B.\text{brother}$;~ $Z := B$;\}
  }

    \ForAll{$C\in P$}{$\text{Count}(X,C):=0$\;
    \tcp*[h]{{\rm $~\text{Count}(Z,C):=\text{Count}(B,C)$;}}
    %\texttt{/*}$~\text{Count}(Z,C):=\text{Count}(B,C)$;~\texttt{*/} 
    }

  \ForAll{$x \in X$}{
  	\ForAll{$y \in \post(x)$}{
		\If{{\rm unmarked$(y.\text{block})$}}{
    	mark$(y.\text{block})$\;
        \ForAll{$C\!\in\! P$ \KwSty{such that} $C\tle y.\text{block}$}{
        Count($X$,$C$)++\;
        \lIf{{\rm $(\text{BCount}(Z,y.\text{block})\!=\!0)$}}
        {Count($Z$,$C$)$\,$--$\,$--;$\!\!$}
        }               
      }
      }
      }
    \tcp*[h]{{\rm \text{$\!$For all $D\!\not\in\! \{B,B.\text{brother}\}$, update $\!\text{Count}(D,\cdot)$}}}

  \ForAll{{\rm $D\in X$.preE}}{
  	\If{{\rm $(D\!\neq\! X \,\&\, D\!\neq\! Z \,\&\, \text{BCount}(D,Z)\!=\!1)$}} {
		\lForAll{$C\in P$ \KwSty{such that} $C\tle X$}{
    $\text{Count}(D,C)$++\; 
  }
  }
  }
  }

\Indm
$\}$

\caption{{\rm \textit{updateCount}$()$} function.}\label{updateCountFig}
\end{algorithm}

\noindent
Moreover, if some block $D\in P'\smallsetminus\{B\cap S, 
B\smallsetminus S\}$ is such that both 
$D\sraee X$ and $D\sraee Z$ hold then 
for all the blocks $C\in P$ such that $C\tle X$ 
(or, equivalently, $C\tle Z$), we need to increment $\text{Count}(D,C)$
by $1$. This is done at lines~30-32 by relying on the updated date structures 
preE and BCount. 

Let us observe that the time complexity of 
a single call of $\textit{updateCount}(\mathit{split})$ 
is 
$$
|P|\big(|\mathit{split}| + 
{\sum_{X\in \textit{split}}}  \big(|\{(x,y)~|~ x\in X, y\in \Sigma, x\sra y\}|
+|\{(X,D)~|~ D\in P, X\sraee D\}|\big)\big).
$$
Hence, let us  calculate the overall time complexity of
$\textit{updateCount}()$. 
If $X$ and $X'$ are two blocks that are scanned in two different 
calls of $\textit{updateCount}$ and $X'\subseteq X$ then $|X'|\leq |X|/2$. 
Consequently, any  transition $x\sra y$  at line~23 and 
$D\sraee X$ at line~30
can be 
scanned in some call of $\textit{updateCount}()$ at most
$\log_2 |\Sigma|$ times.
Thus, the overall time
complexity of $\textit{updateCount}()$ is in
$O(|\Psim||\sra|\log|\Sigma|)$.

\IncMargin{0.25em}
\begin{algorithm}%[Ht]
\label{rstabilizeFig}
\footnotesize
\SetAlgoVlined
\SetNlSkip{1.5em}
\SetAlTitleFnt{textsc}

\Indm
\FuncSty{{\rm $\KwSty{bool}~\RStabilize ()\;\{$}}

\Indp
\tcp*[h]{$\mu_\cP^{\mathrm{in}} := \mu_\cP;$}

\lForAll{$C\in P$}{$\{\oldRemove(C) := \Remove(C);~ \Remove(C)=\varnothing;\}$}

\KwSty{bool} $\text{Removed} := \cff$\;

\ForAll{$C\in P$ \KwSty{such that} $\oldRemove(C)\neq \varnothing$}{

\tcp*[h]{{\rm $\!$\text{Invariant (Inv): $\forall C\in P. \Remove(C) =\{D\in P~|~ 
D\,\sraee \mu_\cP^{\mathrm{in}} (C),\: D\not\!\!\sraee \mu_\cP (C)\}$}}}

      \ForAll{{\rm $B\in C.\text{preE}$}}{
      	\ForAll{{\rm $D\in \oldRemove(C)$}}{
		\If{$(B\tle D)$}{     
            $B\tle D := \cff$;~
            $\text{Removed} := \ctt$\;
            \tcp*[h]{{\rm update Count and $\Remove$}}
            
            \ForAll{{\rm $F\in D$.preE}}{
              $\text{Count}(F,B) := \text{Count}(F,B)\!-\!1$\;
              \If(\tcp*[h]{$F\sraee \mu_\cP^{\mathrm{in}}(B)\;\&\; F\not\!\!\sraee\mu_\cP(B)$}){{\rm $(\text{Count}(F,B)=0 \;\& \Remove(B)=\varnothing)$}}{
                {$\Remove(B).\text{append}(F)$\;}
            }
            }
            }
}
}
}

\Indm
$\}$

\caption{{\rm $\RStabilize()$} algorithm.}
\end{algorithm}

\subsection{Relation Stability}
 The logical procedure $\RStabilize()$ in Figure~\ref{basicsa}
is implemented by the  algorithm in
 Figure~\ref{rstabilizeFig}.  
Let \mbox{$\cP^{\mathrm{in}}=\tuple{P,\tle^{\mathrm{in}}}$} 
be the current PR when calling $\RStabilize()$. 
For each relation refiner $C\in P$, the function $\RStabilize()$ must iteratively 
refine the initial relation $\tle^{\mathrm{in}}$ in accordance with
equation~(\ref{eq2}) in Section~\ref{secbasicsa}. 
Hence, if $B\sraee C$, $B\tle D$ and $D\ok{\not\!\!\sraee} \mu_{\cP^{\mathrm{in}}} (C)$, 
the entry $B\tle D$ of the boolean matrix that represents the relation $\tle$ must
be set to $\cff$. Thus, the idea is to store and
incrementally maintain for each block $C\in P$ a list
$\Remove(C)$ of blocks $D\in P$ such that:
\begin{itemize}
\item[{\rm (A)}] If $C$ is a relation refiner for $\cP^{\mathrm{in}}$ then $\Remove(C)\neq
\varnothing$;  
\item[{\rm (B)}] If $D\in \Remove(C)$ then necessarily
\mbox{$D\not\!\!\sraee \mu_\cP^{\mathrm{in}} (C)$}. 
\end{itemize}
It turns out that 
$C$ is a relation refiner for $\cP^{\mathrm{in}}$ iff there exist blocks $B$ and $D$
such that $B\sraee C$, $D\in \Remove(C)$ and $B\tle D$. Hence, the set of blocks 
$\Remove(C)$ 
is reminiscent of the set of states $\mathit{remove}(s)$ used 
in Henzinger et al.'s~\cite{hhk95} simulation algorithm, since each pair 
$(B,D)$ which must be removed from the relation $\tle$ is 
such that $D\in \Remove(C)$, for some block $C$. 
\newline
\indent  
Initially, namely at the first call
of $\RStabilize()$ by $\ESim$, $\Remove(C)$ is set by the function
$\textit{Initialize}()$ 
to $\{D\in P~|~D\sraee \Sigma,\, D\ok{\not\!\!\sraee}\mu_\cP(C)\}$.
Hence, 
$\RStabilize()$ scans all the blocks in the current partition $P$ and selects
those blocks $C$ such that $\Remove(C)\neq \varnothing$, 
which are therefore candidate to be relation refiners. 
Then, by scanning all the blocks $B\in C.$preE and $D\in \Remove(C)$, 
if $B\tle D$ holds then the entry $B\tle D$ must be set to $\cff$. 
However, the removal of the pair $(B,D)$ from the current relation $\tle$ may affect
the function $\mu_\cP$. This is avoided by making a copy $\oldRemove(C)$ of all the
$\Remove(C)$'s at the beginning of $\RStabilize()$ and then using this copy. 
During the main for-loop of $\RStabilize()$, 
$\Remove(C)$ must satisfy the following invariant property: 
$$
\text{(Inv):}~~\forall C\in P. \Remove(C) = \{D\in P~|~
D\,\sraee \mu_\cP^{\mathrm{in}} (C),\:
 D\not\!\!\sraee \mu_\cP (C)\}.
$$
This means that at the beginning of $\RStabilize()$, any $\Remove(C)$ is set
to empty, and after the removal of a pair $(B,D)$ from $\tle$, since
$\mu_\cP(B)$ has changed,  
we need: (i) to update the matrix Count, for all the entries
$(F,B)$ where $F\sraee D$, and (ii) to
check 
if there is some block $F$ such that $F\not\!\!\sraee \mu_\cP(B)$, because
any such $F$ must be added to $\Remove(B)$ in order to maintain  the 
invariant property (Inv).

\subsection{Complexity}
The time complexity 
of the algorithm $\ESim$ relies on the following key properties:
\begin{itemize}
\item[{\rm (1)}] The overall number of partition refiners
found by $\ESim$ 
is  in $O(|\Psim|)$.  
Moreover, the overall number of newly
generated blocks by the splitting operations performed by calling
$\Split(S)$ at line~4 of $\PStabilize()$
is in $O(|\Psim|)$. 
In fact, let $\{P_i\}_{i\in [0,n]}$ be the
sequence of different partitions computed by $\ESim$ where $P_0$ is the
initial partition $P_{\ell}$, $P_n$ is the final partition $\Psim$ and
for all $i\in [1,n]$, $P_i$ is the partition
after the $i$-th call to $\Split(S)$, so that 
$P_{i} \prec P_{i-1}$. The number of new
blocks which are produced by a call $\Split(S)$ that refines $P_i$ to
$P_{i+1}$ is $2(|P_{i+1}| - |P_i|)$. Thus, the overall number of newly
generated blocks is $\sum_{i=1}^{n} 2(|P_{i}| - |P_{i-1}|) =
2(|P_{\mathrm{sim}}|-|P_{\ell}|)\in  O(|\Psim|)$.
\item[{\rm (2)}]
The invariant (Inv) 
of the sets $\Remove(C)$ guarantees the following property: 
if $C_1$ and $C_2$ are two blocks that are selected by the
for-loop at line~5 of $\RStabilize()$ in two different calls
of $\RStabilize()$, and $C_2 \subseteq C_1$
(possibly $C_1 = C_2$) then $(\cup \!\Remove(C_1))\cap (\cup
\!\Remove(C_2))=\varnothing$.
\end{itemize}

\begin{theorem}\label{complexity}
  $\ESim$ runs in 
  $O(|P_{\mathrm{sim}}|^2 \log |\Psim| + |\Sigma|\log |\Sigma|)$-space and
  $O(|P_{\mathrm{sim}}||\sra|\log |\Sigma|)$-time.
\end{theorem}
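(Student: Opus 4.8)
The statement splits into a space bound and a time bound, which I would treat separately.

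For the space bound, the plan is to sum the bit-sizes of all data structures of $\ESim$, using that the current partition $P$ is only refined towards $\Psim$, so $|P|\le|\Psim|$ throughout and every $P$-indexed matrix is at most $|\Psim|\times|\Psim|$. The boolean matrices $\tle$ and $\BCount$ cost $O(|\Psim|^2)$ bits, whereas the integer matrix $\Count$, whose entries are bounded by $|P|\le|\Psim|$, costs $O(|\Psim|^2\log|\Psim|)$ bits. The block-indexed lists $B.\preE$ and $\Remove(B)$ (together with the copy $\oldRemove$) store block identifiers and contain $O(|\Psim|^2)$ entries of $\log|\Psim|$ bits, again $O(|\Psim|^2\log|\Psim|)$. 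The list of states $\Sigma$ with the block pointers, $\text{begin}/\text{end}$ pointers and sizes contributes $O(|\Sigma|\log|\Sigma|)$ bits, the read-only input $\{\post(s)\}_{s\in\Sigma}$ not being counted. Summing yields the claimed $O(|\Psim|^2\log|\Psim| + |\Sigma|\log|\Sigma|)$.

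For the time bound, I would first bound the number of iterations of the main while-loop by $O(|\Psim|)$: every iteration is entered with $\RStable=\ctt$ and $\neg\PStable$, hence executes $\PStabilize()$, and the exit structure of the loop forces every non-final iteration to perform at least one split; since the overall number of splits is $O(|\Psim|)$ by key property~(1), there are $O(|\Psim|)$ iterations and thus $O(|\Psim|)$ calls to each of $\PStabilize()$ and $\RStabilize()$. Next I would charge the work of $\PStabilize()$ excluding $\mathit{updateCount}()$. The routines $\mathit{FindPRefiner}$, $\pre\!\mu$, $\mathit{updateBCount}$ and $\mathit{updatePreE}$ each scan all transitions once, costing $O(|\sra|)$ per call; as the number of successful refiner searches equals the $O(|\Psim|)$ refiners found and there is one unsuccessful search per $\PStabilize()$ call, these sum to $O(|\Psim||\sra|)$. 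The routines $\mathit{updateRel}$, $\mathit{updateRem}$ and the matrix resizings cost $O(|P|\cdot|\mathit{split}|)$ per split-phase, summing to $O(|\Psim|^2)$. Together with $\Initialize()$, which is likewise $O(|\Psim||\sra|)$, the non-$\mathit{updateCount}$ part of $\PStabilize()$ is $O(|\Psim||\sra|)$.

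The two delicate contributions are $\mathit{updateCount}()$ and $\RStabilize()$. For $\mathit{updateCount}()$ I would invoke the analysis already carried out in the excerpt: Hopcroft's ``process the smaller half'' principle ensures that every state-transition and every block-transition $X\raee D$ is rescanned at most $\log_2|\Sigma|$ times, giving $O(|\Psim||\sra|\log|\Sigma|)$ in total. For $\RStabilize()$ the target is the tighter $O(|\Psim||\sra|)$. The cost of removing a pair $(B,D)$ is dominated by the scan of $D.\preE$ used to update $\Count$ and $\Remove$, i.e.\ $O(|D.\preE|)$; I would bound the run-total by charging each $\Count$-decrement to a block-transition $F\raee D$ paired with the block $B$, and then use key property~(2): for a block $C$ and its refinements selected across distinct $\RStabilize()$ calls the state-sets $\cup\Remove(C)$ are pairwise disjoint, so the targets $D$ processed along one lineage never overlap. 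This disjointness is what makes the charges telescope, in the spirit of the amortized analyses of HHK and RT, yielding $O(|\Psim||\sra|)$.

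Summing the three contributions, the dominating term is the $O(|\Psim||\sra|\log|\Sigma|)$ of $\mathit{updateCount}()$, which is exactly the claimed time complexity. I expect the main obstacle to be the $\RStabilize()$ analysis: making the charging precise requires tracking how $\Remove$, $\Count$ and $\preE$ evolve across the interleaved splits and relation-removals, and verifying that invariant (Inv) together with property~(2) really prevents any block-transition/target pair from being charged more than $O(|\Psim|)$ times. Obtaining the clean $O(|\Psim||\sra|)$ there, rather than a naive $O(|\Psim|^3)$ or $O(|\Psim|^2|\sra|)$, is the crux; the remaining contributions are routine summations.
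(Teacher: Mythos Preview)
Your proposal is correct and follows essentially the same decomposition as the paper's own proof: the space bound is obtained by summing the bit-sizes of the data structures (with $\Count$ dominating at $O(|\Psim|^2\log|\Psim|)$ and the state list at $O(|\Sigma|\log|\Sigma|)$), and the time bound is split into $\Initialize$, $\PStabilize$ without $\mathit{updateCount}$, the Hopcroft-based $\mathit{updateCount}$ analysis, and the $\RStabilize$ analysis via key property~(2). The only cosmetic difference is that you bound the number of $\PStabilize$/$\mathit{FindPRefiner}$ calls through the main-loop structure (each non-final iteration forces a split), whereas the paper argues directly that the number of non-null refiner returns is $O(|\Psim|)$; both routes yield the same $O(|\Psim||\sra|)$ contribution, and your identification of the $\RStabilize$ charging argument as the crux is exactly right.
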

\begin{proof}
%\noindent
\textbf{\emph{Space Complexity.}}
The input transition system is represented by 
the $\post$ relation, so that the size of $\post$ is not taken into account in
the space complexity of $\ESim$.  The doubly linked list of states take 
$O(|\Sigma| \log |\Sigma|)$  
while the pointers $s$.block take $O(|\Sigma| \log |\Psim|)$. 
The partition $P$ and the pointers stored in
eack block of $P$ overall take $O(|\Psim|\log|\Sigma|)$. 
The binary relation $\tle$ takes $O(|\Psim|^2)$. The auxiliary data
structures  $\Remove$, preE and BCount 
overall take $O(|\Psim|^2)$, 
while the integer matrix Count takes  $O(|\Psim|^2\log |\Psim|)$.
Hence, the overall bit space complexity for storing the above data structures 
is  $O(|\Psim|^2 \log |\Psim| + |\Sigma|\log |\Sigma|)$. 

\medskip
\noindent
\textbf{\emph{Time Complexity.}}
The time complexity bound of $\ESim$ is shown by the following points. 
\begin{itemize}
\item[{\rm (A)}] The initialization function $\mathit{Initialize}$ takes 
$|P|^2 + |\sra| + |P| |\{(B,D)~|~B,D\in P,\: B\sraee D\}|$ time. 
Observe that $|P|\leq |\Psim| \leq |\sra|$ so that 
the time complexity of $\mathit{Initialize}$ 
is in $O(|\Psim||\sra|)$. 

\item[{\rm (B)}] 
A call to $\pre\!\mu(C)$ takes $O(|\sra|)$ time.
A call to $\mathit{Split}(S)$ takes $|S|$ time. Since $S$ is
returned by $\pre\!\mu(C)$, $|S|\leq |\sra|$ holds  so that
the time complexity of a
call to $\mathit{Split}(S)$ is in $O(|\sra|)$.
A call to $\textit{Post}(B)$ takes  $|\{(b,c)~|~b\in B,\, c\in \Sigma,\, b\sra c\}|$ time, so that
a call to $\textit{FindPRefiner}$ takes $O(|\sra|)$ time.  
Moreover, let us observe that $\textit{FindPRefiner}$ returns $\cnull$ just once,
because when $\textit{FindPRefiner}$ returns $\cnull$ the current PR of $\ESim$ 
is both 
partition and relation stable and therefore $\ESim$ terminates and outputs
that PR. 
Consequently, since, by point~(1) above, the overall number of partition refiners 
is in $O(|\Psim|)$, the overall number of function calls 
for $\textit{FindPRefiner}$ is in  $O(|\Psim|)$ and, in turn, 
the overall time complexity of $\textit{FindPRefiner}$ is in $O(|\Psim||\sra|)$ time. 
Also, 
the overall time complexity of $\pre\!\mu(C)$ and $\mathit{Split}(S)$
is in   $O(|\Psim||\sra|)$.

\item[{\rm (C)}] Let us observe that the calls
$\textit{updateRel}(\mathit{split})$ and $\textit{updateRem}(\mathit{split})$
take $O(|P||\mathit{split}|)$ time,  while $\textit{updatePreE}()$ and
$\textit{updateBCount}(\mathit{split})$
take $O(|\sra|)$ time.  Since the overall number of calls for these functions 
is in $O(|\Psim|)$ and since $\sum_{i\in \text{Iterations}} |\mathit{split}_i|$ 
is in $O(|\Psim|)$, 
 it turns out that their overall time complexity is
in $O(|\Psim|(|\Psim| + |\sra|))$, so that, since $|\Psim|\leq |\sra|$, 
it is in $O(|\Psim||\sra|)$.  Moreover, 
as already shown in Section~\ref{uds}, the overall time
complexity of $\textit{updateCount}(\mathit{split})$ is in
$O(|\Psim||\sra|\log|\Sigma|)$.

\item[{\rm (D)}] 
Hence, by points~(B) and (C), the overall time complexity of
$\PStabilize()$ is in $O(|\Psim||\sra|\log|\Sigma|)$. 

\item[{\rm (E)}] 
Let $C\in P_{\mathrm{in}}$ be some block of
the initial partition and let $\langle C_i\rangle_{i\in I_C}$, for some
set of indices $I_C$, be a
sequence of blocks selected by the for-loop  at line~5 of
$\RStabilize()$ such that: (a)~for any $i\in I_C$,
$C_i\subseteq C$ and (b)~for any $i$, $C_{i+1}$ has been selected after $C_i$ 
and $C_{i+1}$ is contained in
$C_i$. Observe that $C$ is
the parent block in $P_{\mathrm{in}}$ of all the $C_i$'s. Then, by the property~(2)
above, it
turns out that the
corresponding sets in $\{\cup\! \Remove(C_i)\}_{i\in I_C}$ are pairwise
disjoint so that $\sum_{i\in I_C} |\Remove(C_i)| \leq |\Psim|$. This property
guarantees that if $D\in \oldRemove(C_i)$ at line~8
then for all the blocks $D'\subseteq D$
and for any $j \in I_C$ such that $i<j$,  
$D' \not\in \oldRemove(C_j)$.
Moreover, if the test $B \tle D$ at line~9 is true for some iteration $k$, 
so that $B\tle D$ is set to $\cff$, 
then for all the blocks $D'$ and $B'$ such that $D'\subseteq D$
and $B'\subseteq B$ the test $D'\tle B'$ will be always false for all the
iterations which follow $k$.   
{}From these observations, we derive that 
the overall time complexity of the 
code of the for-loop at lines~7-10 is  
$\sum_C \sum_{i\in I_C} \sum_{B\sraee C} |\Remove(C_i)|   \leq
|P_{\mathrm{sim}}||\{(B,C)~|~B,C\in \Psim, B\sraee C\}|\leq |\Psim||\sra|$. 
Moreover, the overall time complexity of the code of the for-loop 
at lines 12-15 is $\sum_B \sum_D  
\sum_{F \sraee D} 1 \leq  |P_{\mathrm{sim}}||\{(F,D)~|~F,D\in \Psim, F\sraee D\}|\leq
|\Psim||\sra|$. 
We also observe that the overall time complexity of
the for-loop at line~3 of $\RStabilize()$ is in $O(|\Psim|^2)$. 
Thus, the overall time complexity of $\RStabilize()$ is in $O(|\Psim|(|\Psim|+|\sra|))$,
so that, since $|\Psim|\leq |\sra|$, 
it is in $O(|\Psim||\sra|)$.   
\end{itemize}
Summing up, by points~(A), (D) and (E), 
we have shown that the overall time complexity of $\ESim$ is in
$O(|\Psim||\sra|\log|\Sigma|)$. 
\end{proof}

\section{Conclusion and Further Work}
We have introduced a new algorithm, called ESim, 
for efficiently computing the simulation preorder which: 
(i)~reaches the space bound of the simulation algorithm GPP~\cite{gpp02,gpp03}~---~which has 
the best space complexity~---~ 
while  significantly improving its time bound; 
(ii)~significantly 
improves the space bound of the simulation algorithm RT~\cite{rt07,rt10}~---~which has 
the best time complexity~---~while closely approching its time bound. 
Moreover, the space complexity of ESim is quasi-optimal, meaning 
that it differs only for logarithmic factors from the size of the output. 

We see a couple of interesting avenues for further work. 
A first natural question arises: can the time complexity of 
ESim be further improved and reaches the time complexity of RT? 
This would require to eliminate the multiplicative factor $\log |\Sigma|$ 
from the time complexity of ESim and, presently, this seems 
to us quite hard 
to achieve. More in general, it would be interesting 
to investigate whether 
some lower space and time bounds can be stated 
for the simulation preorder problem. 
Secondly, ESim is designed for Kripke structures. 
While an adaptation of a simulation algorithm from 
Kripke structures  to labeled transition systems 
(LTSs) can be conceptually simple, unfortunately such a shift may lead
to some loss in both space and time complexities, as argued in~\cite{cec13}. We mention 
the works \cite{abdulla,hs09} and 
\cite{mar11} that 
provide simulation algorithms for LTSs by adapting, respectively,
RT and GPP. It is  thus worth investigating whether and how ESim can be 
efficiently 
adapted to work with  LTSs.  
 
\paragraph*{\rm \textbf{Acknowledgements.}} We acknowledge the contribution of Francesco Tapparo
to a preliminary stage of this research which was informally 
presented in \cite{rt09}.  This work was partially supported by 
Microsoft Research SEIF 2013 Award and by the University of
Padova under the project BECOM.


\begin{thebibliography}{99}
\small

\bibitem{abdulla}
P.A.~Abdulla, A,~Bouajjani, L.~Hol{\'{\i}}k, L.~Kaati and T.~Vojnar.
\newblock Computing simulations over tree automata.
\newblock In \emph{Proc.\ 14th Int.\ Conf.\ on
Tools and Algorithms for the Construction and Analysis of Systems 
(TACAS'08)}, LNCS~4963, pp.~93--108,
2008.


\bibitem{bloom89}
B.~Bloom. 
\newblock \emph{Ready simulation, bisimulation, and the semantics of CCS-like languages}.
\newblock Ph.D. thesis, Massachusetts Institute of
Technology, 1989.

\bibitem{bp95}
B.~Bloom and R.~Paige. 
\newblock Transformational design and implementation of a new
efficient solution to the ready simulation problem.
\newblock \emph{Sci.\ Comp.\ Program.}, 24(3):189-220, 1995.

\bibitem{bg03}
D.~Bustan and O.~Grumberg.
\newblock Simulation-based minimization.
\newblock \emph{ACM Trans.\ Comput.\ Log.},
4(2):181-204, 2003.

\bibitem{cec13}
G.~C{\'e}c{\'e}.  
\newblock Three simulation algorithms for labelled transition systems. 
\newblock Preprint cs.arXiv:1301.1638, arXiv.org, 2013. 

\bibitem{cgp99}
E.M.~Clarke, O.~Grumberg and D.A.~Peled.
\newblock \emph{Model Checking}.
\newblock The {M}{I}{T} Press, 1999.

\bibitem{cps93}
R.~Cleaveland, J.~Parrow and B.~Steffen. 
\newblock 
The concurrency workbench: A semantics based tool
for the verification of concurrent systems. 
\newblock \emph{ACM Trans.\ Program.\ Lang.\ Systems}, 15(1):36-72, 1993. 

\bibitem{cormen}
T.H.~Cormen, C.E.~Leiserson, R.L.~Rivest and C.~Stein.
\newblock \emph{Introduction to Algorithms}.
\newblock The {M}{I}{T} Press and McGraw-Hill, 2nd ed., 2001.


\bibitem{CRT11}
S.~Crafa, F.~Ranzato and F.~Tapparo.
\newblock 
Saving space in a time efficient simulation algorithm. 
\newblock \emph{Fundam.\ Inform.}, 108(1-2):23-42, 2011. 

\bibitem{gpp02}
R.~Gentilini, C.~Piazza and A.~Policriti.
\newblock Simulation as coarsest partition problem. 
\newblock In {\em Proc.\ Int.\ Conf.\ on Tools and Algorithms for the 
Construction and Analysis of Systems (TACAS'02)}, LNCS 2280, 
pp.~415-430, 2002. 

\bibitem{gpp03}
R.~Gentilini, C.~Piazza and A.~Policriti.
\newblock From bisimulation to simulation: coarsest partition problems.
\newblock \emph{J.\ Automated Reasoning}, 31(1):73-103, 2003. 



\bibitem{GP08}
R.~van~Glabbeek and B.~Ploeger.
\newblock Correcting a space-efficient simulation algorithm.
\newblock In \emph{Proc.\ 20th Int.\ Conf.\ on
Computer Aided Verification (CAV'08)}, LNCS~5123,
pp.~517-529, 2008.

\bibitem{gv90}
J.F.~Groote and F.~Vaandrager.
\newblock An efficient algorithm for branching bisimulation and
stuttering equivalence.
\newblock In \emph{Proc.\ 17th ICALP}, LNCS~443,
pp.~626-638,  1990.


\bibitem{hhk95}
M.R.~Henzinger, T.A.~Henzinger and P.W.~Kopke.
\newblock Computing simulations on finite and infinite graphs.
\newblock In \emph{Proc.\ 36th IEEE FOCS}, 453-462, 1995.

\bibitem{hs09}
L.~Hol{\'{\i}}k and J.~{\u{S}}im\'a{\u{c}}ek. 
\newblock Optimizing an LTS-simulation algorithm.
\newblock \emph{Computing and Informatics}, 29(6+):1337-1348, 2010. 

\bibitem{hop71}
J.E.~Hopcroft.
\newblock A {$n \log n$} algorithm for minimizing states in a finite
automaton. 
\newblock In Z.~Kohavi and A.~Paz eds., 
\emph{Theory of Machines and Computations}, pp.~189-176, Academic Press, 
1971.

\bibitem{mar11}
J.~Markovski.
\newblock Saving time in a space-efficient simulation algorithm. 
\newblock In \emph{Proc.\ 11th Int.\ Conf.\ on Quality
Software}, pp.~244-251, IEEE, 2011. 

\bibitem{pt87}
R.~Paige and R.E.~Tarjan.
\newblock Three partition refinement algorithms.
\newblock \emph{SIAM J.\ Comput.}, 16(6):973-989, 1987

\bibitem{ran-mfcs13}
F.~Ranzato. 
\newblock 
A more efficient simulation algorithm on {K}ripke structures. 
\newblock In {\em Proceedings of the 38th International Symposium on Mathematical Foundations of Computer Science (MFCS'13)}, LNCS~8087, pp.~753-764, Springer, 2013. 


\bibitem{rt07}
F.~Ranzato and F.~Tapparo. 
\newblock 
A new efficient simulation equivalence algorithm. 
\newblock In {\em Proc.\ 22nd Annual IEEE Symp.\ on Logic in Computer Science (LICS'07)}, pp.~171-180, 2007. 

\bibitem{rt09}
F.~Ranzato and F.~Tapparo. 
\newblock  A time and space efficient simulation algorithm. 
\newblock Short talk at
\emph{24th Annual IEEE Symposium on Logic in Computer Science (LICS'09)}, 2009.  


\bibitem{rt10}
F.~Ranzato and F.~Tapparo.
\newblock An efficient simulation algorithm based on abstract interpretation.
\newblock \emph{Inf.\ Comput.}, 208(1):1-22, 2010. 


\bibitem{TC01}
L.~Tan and R.~Cleaveland.
\newblock Simulation revisited.
\newblock In \emph{Proc.\ 
7th Int.\ Conf.\ on Tools and Algorithms for the 
Construction and Analysis of Systems 
(TACAS'01)}, LNCS~2031, pp.~480-495,
2001.


\end{thebibliography}
\end{document}